\newcommand{\bR}{\mathbb{R}}
\newcommand{\cD}{\mathcal{D}}
\newcommand{\argmax}{\operatornamewithlimits{argmax}}
\newcommand{\mathsc}[1]{{\normalfont\textsc{#1}}}
\newcommand{\Eq}{\mathsc{Eq}}
\newcommand{\Opt}{\mathsc{Opt}}
\newcommand{\OPT}{\Opt}
\newcommand{\LW}{\mathsc{LW}}
\newcommand{\Veq}{V^{\Eq}}
\newcommand{\Spend}{\mathsc{Spend}}
\newcommand{\pis}{\pi^*}
\newcommand{\vs}{v^*}
\newcommand{\bB}{\bar{B}}
\newcommand{\IOPT}{\mathsc{I}\mbox{-}\Opt}
\newcommand{\Cost}{\mathsc{Cost}}
\newcommand{\oi}{\bar{i}}
\newcommand{\setst}[2]{\{ {#1} \,:\, {#2} \}}
\newcommand{\eps}{\varepsilon}
\DeclareMathOperator{\rFPA}{rFPA}
\newcommand{\prob}[1]{\Pr\left[ {#1} \right]}
\title{Efficiency of Non-Truthful Auctions in Auto-bidding with Budget Constraints}
\author{Christopher Liaw \and Aranyak Mehta \and Wennan Zhu}
\institute{Google, Mountain View, California, USA, \\ \email{\{cvliaw,aranyak,wennanzhu\}@google.com}}
\begin{document}

\maketitle

\begin{abstract}
    We study the efficiency of non-truthful auctions for auto-bidders with both return on spend (ROS) and budget constraints. The efficiency of a mechanism is measured by the price of anarchy (PoA), which is the worst case ratio between the liquid welfare of any equilibrium and the optimal (possibly randomized) allocation. Our first main result is that the first-price auction (FPA) is optimal, among deterministic mechanisms, in this setting. Without any assumptions, the PoA of FPA is $n$ which we prove is tight for any deterministic mechanism. However, under a mild assumption that a bidder's value for any query does not exceed their total budget, we show that the PoA is at most $2$. This bound is also tight as it matches the optimal PoA without a budget constraint. We next analyze two randomized mechanisms: randomized FPA (rFPA) and ``quasi-proportional'' FPA. We prove two results that highlight the efficacy of randomization in this setting. First, we show that the PoA of rFPA for two bidders is at most $1.8$ without requiring any assumptions. This extends prior work which focused only on an ROS constraint. Second, we show that quasi-proportional FPA has a PoA of $2$ for any number of bidders, without any assumptions. Both of these bypass lower bounds in the deterministic setting.
    % We show that randomization improves efficiency in this setting by proving that the PoA of rFPA for two bidders is at most $1.8$, without requiring any assumptions. This is consistent with prior work which focused only on an ROS constraint.
    % We show that the PoA of rFPA for two bidders is at most $1.8$ without requiring any assumptions. This shows that randomization improves efficiency in this setting and extends prior work which focused only on an ROS constraint. We also show that a quasi-proportional FPA has a PoA of $2$, without any assumptions. This result showcases the efficacy of randomization in the presence of budget constraints. 
    Finally, we study the setting where bidders are assumed to bid uniformly. We show that uniform bidding can be detrimental for efficiency in deterministic mechanisms while being beneficial for randomized mechanisms, which is in stark contrast with the settings without budget constraints.
    % , which is in stark contrast with the folklore belief that uniform bidding helps to improve efficiency. On the other hand, we show that uniform bidding may be beneficial for randomized mechanisms.
    % and show that uniform bidding could reduce the PoA for randomized mechanisms while being detrimental for deterministic mechanisms.
\end{abstract}

\section{Introduction}
Auto-bidding has become increasingly popular in online advertising as it allows advertisers\footnote{We will often use advertisers and bidders interchangeably in this paper since each auto-bidder acts on behalf of a single agent.} to set high-level goals and constraints, rather than manually submitting bids for each individual keyword. A prototypical example is that the advertiser may set a goal to maximize the total volume of conversions subject to a return-on-spend (ROS) constraint and a total budget constraint. There are other possible targets, e.g. maximizing the number of shown ads, and other constraints, e.g. target cost per acquisition (tCPA). In general, we can model such problems as follows: each advertiser $i$ has a value $v_{i,j}$ for query $j$, and advertiser $i$'s goal is to maximize their total value subject to the constraint that their total value exceeds their total spend. Auto-bidding agents then act on behalf of advertisers to solve this optimization problem. This can save advertisers a significant amount of time and effort, and it can also help them to achieve better results from their advertising campaigns. In this paper, we refer to advertisers or bidders that maximize their value subject to a constraint as value-maximizers.

Prior research has shown that the efficiency of auctions for value-maximizing auto-bidders is different from that of the more traditional utility maximizers that aim to maximize their quasilinear utility \cite{aggarwal2019autobidding, balseiro2022optimal, balseiro2021landscape, deng2022efficiency, deng2021towards, liaw2023efficiency, mehta2022auction}. Most prior work focused on auto-bidding with only a ROS constraint, or with both ROS and budget constraints but restricted to the Vickrey-Clarke-Groves (VCG) auction \cite{aggarwal2019autobidding, deng2021towards}. In this paper, we are interested in the price of anarchy (PoA) \cite{azar2017liquid} of non-truthful\footnote{In this paper, as in prior works, an auction is ``truthful'' if it is a dominant strategy for the a traditional quasi-linear to bid truthfully. By ``non-truthful'', we refer to auctions that may not be truthful.} auctions, such as the first price auction (FPA) and the randomized first price auction (rFPA), for value-maximizing bidders with ROS and budget constraints.

The analysis of the PoA for FPA with both a ROS and a budget constraint is interesting and challenging in several aspects. First, it turns out that the gap between the optimal deterministic allocation and the optimal randomized allocation is large. For FPA without a budget constraint, an optimal solution is to simply allocate each query to the bidder with the highest value. In particular, there is no benefit in using a randomized, or fractional, allocation. However, in Theorem~\ref{thm:IOPT-ub}, we show that in the setting with both ROS and budget constraints, the gap between the optimal integral allocation and the optimal fractional allocation can be as large as $n$, the number of bidders. Given this, we consider the following two notions of efficiency. First, we define PoA as the worst case ratio between the liquid welfare of any equilibrium and the optimal \emph{randomized} allocation. Second, we define the integral PoA (I-PoA) as the worst case ratio between the liquid welfare of any equilibrium and the optimal \emph{deterministic} allocation. Note that the PoA is always at least as large as the I-PoA. From our earlier observation, it follows that any deterministic auction has a PoA of at least $n$. In the setting with both ROS and budget constraints, we ask and answer the following questions in this paper: 
    \begin{itemize}
        \item What is the PoA and integral PoA of FPA?
        \item Is there a randomized algorithm that guarantees a constant PoA?
        \item Are there reasonable assumptions that decrease the gap between the optimal fractional and optimal integral allocation? Are there deterministic auctions with a constant PoA under these assumptions? 
    \end{itemize}
Deng et al.~\cite{deng2021towards} showed that the PoA of FPA is $2$ with a ROS constraint but the PoA becomes 1 if the bidders are assumed to bid uniformly, i.e.~each bidder $i$ has a uniform multiplier $m_i$ and places a bid of $m_i v_{i,j}$ for every query $j$. It is natural to ask whether uniform bidding also improves the equilibria of FPA. More specifically:
\begin{itemize}
    \item Does the uniform bidding assumption make the PoA and integral PoA smaller for FPA with both ROS and budget constraints?
    \item Does the uniform bidding assumption make the PoA smaller for rFPA?
\end{itemize}

\subsection{Our contributions}
We study the efficiency of various non-truthful auctions in the auto-bidding setting with both ROS and budget constraints. 

We use $n$ to denote the number of bidders. We use $\OPT$ (resp.~$\IOPT$) to denote the optimal liquid welfare achievable by a randomized (resp.~deterministic) allocation.
% and $\IOPT$ be the optimal liquid welfare achievable by a deterministic allocation.
Note that in an auto-bidding setting with ROS but not budget constraints, we have $\OPT = \IOPT$ since the optimal allocation is to deterministically allocate to the bidder with the highest value.
% there always exists a deterministic $\OPT$ allocation by allocating each query to the bidder with the highest value, i.e.~$\OPT = \IOPT$.
In Theorem~\ref{thm:IOPT-ub}, we show that the gap between $\OPT$ and $\IOPT$ is at least $n$. This implies that the PoA of an deterministic mechanism is at least $n$. Complementing this, we show that the PoA for FPA is at most $n$. Thus, the optimal PoA for deterministic mechanisms is exactly $n$.
This already shows that new techniques will be required for the budgeted case since prior work compared against a deterministic approach in their proofs.

We first study the efficiency of FPA and summarize the results in Table~\ref{tab:fpa-PoA}.
We prove that FPA is an optimal mechanism in a couple settings when there are both ROS and budget constraints.
% The gap between $\OPT$ and $\IOPT$ indicates that any deterministic auction has a
The previous paragraph established that the PoA of any deterministic auction is at least $n$. We prove that this gap is tight in Theorem~\ref{thm:fpa_ub}. Interestingly, we show that the PoA is at most $2$ under the mild assumption that for any bidder $i$, their value for any query $j$ is at most their budget $B_i$. This PoA bound coincides with the setting with only a ROS constraint.

It turns out that the suggested inefficiency of FPA, by incorporating a budget, is due to using a randomized allocation as a benchmark.
If we consider a deterministic benchmark instead, we can show that FPA has no additional loss in efficiency when a budget is introduced. We formalize the results in Table~\ref{tab:fpa-IPoA}.
% It turns out that the integral PoA of FPA is $2$ both with and without a budget constraint. This is in stark contrast to the PoA (compared with randomized OPT) which show a large gap between the with and without budget constraints settings.
Next, we explore the setting where bidders are assumed to bid uniformly. With only a ROS constraint, previous work has shown that, under the uniform bidding restriction, FPA obtains the optimal allocation \cite{deng2021towards}.
% the PoA is 1 \cite{deng2021towards}. 
Interestingly, enforcing uniform bidding makes the integral PoA much worse for the setting with budget. In fact, we show that the I-PoA is $n$ (tight). Intuitively, when the bidders are assumed to have a single bid multiplier for all queries, they could be in a situation that when they increase the bid multiplier to win some queries, they win more than what they could afford and violate their budget. Note that we show a lower bound of PoA in Theorem~\ref{thm:fpa_uniform_lb}, which indicates a lower bound of I-PoA. Similarly, we show an upper bound of I-PoA in Theorem~\ref{thm:fpa_uniform_ub}, which indicates an upper bound of PoA.

Next, we analyze the randomized FPA (rFPA) mechanism proposed by Liaw et al.~\cite{liaw2023efficiency} in the setting with both ROS and budget constraints for two bidders, and show that the PoA upper bound of 1.8 in their paper also apply in this setting.
Recall that uniform bidding actually decreases the efficiency for FPA (compared against a deterministic benchmark). However, uniform bidding improves the efficiency for rFPA.
% These give additional datapoints to the efficacy of randomization for designing mechanisms for autobidders.
% Although uniform bidding has a large I-PoA for FPA, it actually reduces the PoA upper bound to 1.5 for rFPA.
The results are summarized in Table~\ref{tab:rfpa-PoA}.
The high PoA for uniform bidding in deterministic algorithm is because of the high correlation between all bids from the same bidder, that causes a ``taking nothing or breaking the budget'' situation. For randomized algorithms like rFPA, even when the bids are highly correlated, the bidders are able to increase bids smoothly to get more fractional value. This avoids the cases that cause high PoA in deterministic auctions. 

Finally, we consider a ``quasi-proportional'' FPA mechanism, which chooses each bidder with a probability proportional to a power of their bid. We show that when the power approaches infinity, the PoA of this auction is at most 2 for auto-bidders with both ROS and budget constraints in Theorem~\ref{thm:pfpa_ub}. 

\setlength{\tabcolsep}{12pt}

\begin{table}[]
    \centering
    \caption{Price of Anarchy for FPA}
    \begin{tabular}{ccccc}
        \toprule
        \multirow{2}{6em}{PoA for FPA} 
        & \multicolumn{2}{c}{ROS + budget} & \multicolumn{2}{c}{ROS} \\
        \cmidrule(lr){2-3} \cmidrule(lr){4-5}
        & LB & UB & LB & UB \\[.1em]
        \midrule
        Non-uniform bidding \hspace{2em} 
        & $n$ (Cor~\ref{cor:fpa_lb}) & $n$ (Thm~\ref{thm:fpa_ub}) & $2$ \cite{liaw2023efficiency} &$2$ \cite{liaw2023efficiency} \\[1em]
        Non-uniform, $v_{i,j} \le B_i$ \hspace{2em} 
        & $2$ (Cor~\ref{cor:fpa_IOPT_lb}) & $2$ (Thm~\ref{thm:fpa_small_v_ub}) & $2$ \cite{liaw2023efficiency} &$2$ \cite{liaw2023efficiency} \\[1em]
        Uniform bidding \hspace{2em} & $n$ (Thm~\ref{thm:fpa_uniform_lb}) &$n$ (Thm~\ref{thm:fpa_uniform_ub}) & $1$ & $1$\cite{deng2021towards} \\
        \bottomrule
    \end{tabular}
    \label{tab:fpa-PoA}
\end{table}

\begin{table}[]
    \centering
    \caption{Integral Price of Anarchy (I-PoA) for FPA}
    \begin{tabular}{cccccc}
        \toprule
        \multirow{2}{6em}{I-PoA for FPA} 
        & \multicolumn{2}{c}{ROS + budget} && \multicolumn{2}{c}{ROS} \\
        \cmidrule{2-3} \cmidrule{5-6}
        & LB & UB && LB & UB \\[.1em]
        \midrule
        Non-uniform bidding \hspace{2em} 
        & $2$ (Cor~\ref{cor:fpa_IOPT_lb}) & $2$ (Thm~\ref{thm:fpa_IOPT_ub}) && $2$ \cite{liaw2023efficiency} &$2$ \cite{liaw2023efficiency} \\[1em]
        Uniform bidding \hspace{2em}
        & $n$ (Thm~\ref{thm:fpa_uniform_lb}) & $n$ (Thm~\ref{thm:fpa_uniform_ub}) && $1$ & $1$\cite{deng2021towards} \\
        \bottomrule
    \end{tabular}
    \label{tab:fpa-IPoA}
\end{table}

\begin{table}[]
    \centering
    \caption{Price of Anarchy for rFPA with two bidders}
    \begin{tabular}{cccccc}
        \toprule
        \multirow{2}{6em}{PoA for rFPA} 
        & \multicolumn{2}{c}{ROS + budget} && \multicolumn{2}{c}{ROS} \\
        \cmidrule{2-3} \cmidrule{5-6}
        & LB & UB && LB & UB \\[.1em]
        \midrule
        Non-uniform bidding \hspace{2em} 
        & --  & $1.8$ (Thm~\ref{thm:rfpa_ub})  && -- & $1.8$ \cite{liaw2023efficiency} \\[1em]
        Uniform bidding \hspace{2em}  
        & -- & $1.5$ (Thm~\ref{thm:rfpa_uniform_ub}) && $1$ & $1$\cite{deng2021towards} \\
        \bottomrule
    \end{tabular}
    \label{tab:rfpa-PoA}
\end{table}

\subsection{Related works}
The literature on auto-bidding has focused on various auction mechanisms, constraints, and extra information. The main difference between auto-bidding and traditional quasi-linear bidding is that auto-bidders are value maximizers, while quasi-linear bidders are profit maximizers. Aggarwal et al.~\cite{aggarwal2019autobidding} proposed a general framework for auto-bidding with value and budget constraints in multi-slot truthful auctions. They showed that it is a near optimal strategy for auto-bidders to adopt a uniform bidding strategy, and that the optimal PoA is 2. Mehta \cite{mehta2022auction} showed that in the setting with two bidders with only a ROS constraint but not a budget constraint, the PoA is at most 1.89 with a randomized truthful auction. Liaw et al.~\cite{liaw2023efficiency} showed that the PoA can be improved to 1.8 using a non-truthful auction which they called randomized FPA. We adopt their methodology and extend this result to the setting with both ROS and budget constraints in Theorem~\ref{thm:rfpa_ub}. They also showed that when the number of bidders tends to infinity, any auction satisfying some mild assumptions has a PoA of at least 2. Both Liaw et al.~\cite{liaw2023efficiency} and Deng et al.~\cite{deng2022efficiency} showed that the PoA of FPA with the ROS constraint is 2. We extend this setting to have both the RoS and budget constraints, and show that the extra budget constraint brings the PoA of FPA to $n$ (tight). We also study the integral and fractional PoA of FPA when the bidders are assumed to bid uniformly.

Deng et al.~\cite{deng2022efficiency} also showed that when there are both value maximizers and utility maximizers, the PoA of FPA is at most $1/0.457$ and that this is tight.
% Moreover, they proved that this bound is tight.
There is also a line of work that explores how to utilize machine-learned signals in the auto-bidding setting \cite{deng2021towards, deng2022efficiency, balseiro2022optimal,deng2022fairness}. In particular, they show that by using such signals as reserves and boosts can lead to improved efficiency in the presence of auto-bidders.

The majority of the research mentioned above assumes that there is only a ROS constraint, without any budget constraint. The only exceptions are Aggarwal et al. \cite{aggarwal2019autobidding} and Deng et al. \cite{deng2021towards}, which both studied the integral PoA in VCG auctions, where uniform bidding is nearly optimal. In this paper, we focus on a less studied setting: both the integral and fractional PoA of value maximizing auto-bidders with ROS and budget constraints for a number of non-truthful auctions: FPA, rFPA, and quasi-proportional FPA. There is a thread of research on budget pacing in online or repeated auctions for auto-bidders \cite{conitzer2022pacing, castiglioni2023online, fikioris2023liquid, gaitonde2022budget, lucier2023autobidders} to minimize regret. In this paper, we focus on the price of anarchy of single-shot auctions, which is fundamentally different from budget pacing in online settings in several aspects. First, in the online setting, the bidders follow a specific budget pacing bidding strategy and show that when bidders follow such a strategy, the welfare is at least half of the optimal \cite{lucier2023autobidders}. In this paper, some of our results are in the setting where the bidders are allowed to bid arbitrarily. Thus, the aforementioned result may not apply in this setting. Second, the online setting assumes that the values are drawn i.i.d.~in every round whereas we assume the value may be adversarially chosen. Third, we study the liquid welfare in pure Nash equilibria, where no bidder would deviate from their current bid. In the online setting, people study small regret or diminishing regret bidding strategies when the number of steps approaches infinity, but this is different from a pure Nash equilibrium in the offline setting.

There are also recent related work by Balseiro et al. \cite{balseiro2021robust} and Golrezaei et al. \cite{golrezaei2021auction} which show that both total revenue and welfare can be improved by choosing appropriate reserve prices. Deng et al. \cite{deng2023autobidding} studied the PoA of VCG auctions with user cost. Deng et al.~\cite{deng2022fairness} studied fairness in auto-bidding with machine learned advice.
Ni et al.~\cite{ni2023ad} studied ad auction design with coupons in the auto-bidding world.
Other works on auto-bidding include understanding auto-bidding in multi-channel auctions \cite{aggarwal2023multi,deng2023multi}, incentive properties of auto-bidding \cite{alimohammadi2023incentive, mehta2023auctions}, optimal auction design in the Bayesian setting \cite{balseiro2022optimal}, dynamic auctions for auto-bidders \cite{deng2021prior, deng2022posted}, and resource allocation / auction for utility maximizers with budget constraints \cite{balseiro2023contextual, caragiannis2018efficiency}.

\section{Preliminaries}
Let $A$ be a set of $n$ advertisers and $Q$ be a set of queries. Each advertiser $i \in A$ has a total budget $B_i$ and a value $v_{i,j}$ for query $j \in Q$. We let $b_{i, j}$ denote advertiser $i$'s bid on query $j$.
A single-slot auction is defined via an allocation function $\pi \colon \bR_{+}^A \to [0, 1]^A$, where $\sum_{i \in A} \pi_i(b) \leq 1$ for all $b \in \bR^A_+$, and a cost function $c \colon \bR_+^A \to \bR_+^A$ which denotes the price paid by advertiser $i$ if they are allocated the slot.
In particular, the expected price paid by advertiser $i$ is $\pi_i(b) \cdot c_i(b)$.

\paragraph{Background on Auto-bidding.}
Let us assume that all advertisers except $i$ have fixed their bids.
Then the decision variables for advertiser $i$ are $\{\pi_{i,j}\}_{j \in Q}$ where $\pi_{i,j}$ is the probability that advertiser $i$ wins the query.
The bid $b_{i,j}$ that advertiser $i$ must place to win with probability $\pi_{i,j}$ is implicitly determined by $\pi_{i,j}$.
Finally, we let $c_{i,j}$ denote the cost that advertiser $i$ must pay for query $j$ when they win. Note that $c_{i,j}$ may be a function of $\pi_{i,j}$.

The goal of the auto-bidder is to optimize the advertiser's total value subject to (i) a budget constraint where advertiser $i$'s total expected is no more than $B_i$ and (ii) a target return-on-spend (ROS) constraint where advertiser $i$'s total expected value is no less than their total expected spend \footnote{One can think of the ROS constraint as an ``ex-ante Individual Rationality'' constraint that the ensures that the advertiser does not pay more than their value, on average.}.
More formally, the auto-bidding agent for advertiser $i$ aims to solve the following optimization problem:

% and $c_{i, j}$ be $i$'s cost on query $j$ when $i$ wins it.
% \todo{Is $c_{i,j}$ supposed to be their expected cost? Or their cost, conditioned on them winning? Thinking if we should define $b_{i,j}$, $c_{i,j}$, etc later when we define mechanism?}
% In a first price auction, $c_{i, j} = b_{i, j}$. We first introduce the notations for randomized allocation mechanisms. Let $\pi_{i, j} \in [0, 1]$ denote the probability that $j$ is allocated to advertiser $i$. There are two constraints for each advertiser $i$'s total expected spend: 1, A target return-on-spend (ROS) constraint that $i$'s total expected spend is no more than the total expected value they receive. 2, A budget constraint that $i$'s total expected spend is no more budget $B_i$. There is an auto-bidding agent for advertiser $i$ to solve the following optimization problem:

% \begin{equation*} 
%     \begin{array}{ll}
%     \text{maximize:} & \sum\limits_{j \in Q}\pi_{i,j} v_{i,j}  \\
%     \text{subject to:} & \sum\limits_{j \in Q}\pi_{i,j} c_{i, j} \leq \min \{B_i, \sum\limits_{j \in Q}\pi_{i,j} v_{i,j} \} \\
%     &\forall j \in Q, \text{ } \pi_{i, j} \in [0, 1]
%     \end{array}
% \end{equation*}

\begin{align}
    \text{maximize:} & \sum\limits_{j \in Q}\pi_{i,j} v_{i,j}  \notag \\
    \text{subject to:} & \sum\limits_{j \in Q}\pi_{i,j} c_{i, j} \leq B_i \tag{Budget} \\
    & \sum\limits_{j \in Q}\pi_{i,j} c_{i, j} \leq \sum\limits_{j \in Q}\pi_{i,j} v_{i,j} \tag{ROS} \\
    &\forall j \in Q, \text{ } \pi_{i, j} \in [0, 1]. \notag
\end{align}
Let $\Pi = \{\pi \in [0, 1]^{A \times Q} \,:\, \forall j, \,\, \sum_{i \in A} \pi_{i,j} \leq 1\}$ be the set of all feasible allocations.
Given an allocation $\pi$, we let $\LW(\pi)$ be the liquid welfare which is defined as

\[
\LW(\pi) = \sum_{i \in A} \min\{ B_i,\sum_{j \in Q} \pi_{i,j}v_{i,j} \}.
\]

The notion of liquid welfare was introduced by \cite{dobzinski2014efficiency} and measures the ``willingness to pay'' of an allocation.
Let $\OPT = \max_{\pi \in \Pi} \LW((\pi)$ be the value of the optimal randomized allocation that maximizes liquid welfare
and let $\pi^* \in \argmax_{\pi \in \Pi} \LW((\pi)$ denote one such optimal allocation.

\subsection{Deterministic allocations}
For deterministic allocations, we assume that there is a single bidder with positive probability of winning the query.
In other words, $\pi_{i,j} \in \{0,1\}$ for all $i \in A$ and $j \in Q$.
Let $\Pi^I \coloneqq \Pi \cap \{0,1\}^{A \times Q}$ be the set of feasible deterministic allocations.
Let $\IOPT = \max_{\pi \in \Pi^I} \LW(\pi)$ be the value of the optimal deterministic allocation that maximizes liquid welfare
and let $\pi^{I*} \in \argmax_{\pi \in \Pi^I} \LW(\pi)$ denote one such optimal allocation.
% Let $v_j^* = \max_{i \in A} \pi_{i,j}^* v_{i,j}$ be the value of the winner of query $j$ in $\OPT$.
% \chris{Is there a reason this is defined here and not above?}
% \wennan{Actually, I don't think we used $v_j^*$ in any proof, removed it. Also removed $\pi^{I*}$}

\subsection{Equilibrium}
We say that the bids $\{b_{i, j}\}$ are in an equilibrium if the two statements below holds for each bidder $i$:
\begin{enumerate}
\item Advertiser $i$ satisfies both their ROS and budget constraints: $\sum_{j \in Q}\pi_{i,j} c_{i, j} \leq \min \{B_i, \sum_{j \in Q}\pi_{i,j} v_{i,j} \}$.
\item Let $\pi$ and $c$ be the resulting allocation and costs of $\{b_{i, j}\}$. Suppose bidder $i$ deviates to bids $\{b_{i, j}'\}_{j \in Q}$, while other bidders remain their bids in $\{b_{i, j}\}$. Let $\pi'$ and $c'$ denote the allocation and costs after bidder $i$'s deviation.
Then either bidder $i$ does not gain more value, or bidder $i$ violates their constraint. Formally, at least one of the following two inequalities is true:
\begin{itemize}
    \item $\sum_{j \in Q}\pi_{i,j}' v_{i, j} \le \sum_{j \in Q}\pi_{i,j} v_{i, j}$
    \item $\sum_{j \in Q}\pi_{i,j}' c_{i, j} > \min \{B_i, \sum_{j \in Q}\pi_{i,j}' v_{i,j} \}$. 
\end{itemize}
\end{enumerate}

When we consider a fixed equilibrium $\Eq$ with a deterministic allocation $\pi$, we will use $N(i)$ to denote the set of all queries that bidder $i$ wins in $\Eq$ and $O(i)$ to denote the set of all queries that are assigned to bidder $i$ in $\OPT$. For a query $j$, let $\Spend(j)$ denote the expected spend on query $j$ in the equilibrium. For any bidder $i$, let $\Spend(i)$ denote the total expected spend of bidder $i$ in $\Eq$. For any subset of bidders $A' \subseteq A$, we write

\[
\OPT(A') = \sum_{i \in A'} \min\{ B_i,\sum_{j \in Q} \pi_{i,j}^*v_{i,j}\},
\]
\[
\IOPT(A') = \sum_{i \in A'} \min\{ B_i,\sum_{j \in Q} \pi_{i,j}^{I*}v_{i,j}\},
\]
and
\[
\LW(A') = \sum_{i \in A'} \min\{ B_i,\sum_{j \in Q} \pi_{i,j}v_{i,j} \}.
\]
For a single bidder $i$, we abuse the notation and write $\LW(i) = \LW(\{i\})$ and $\OPT(i) = \OPT(\{i\})$.

Given an instance $S$ and a mechanism $\mathcal{M}$, let $\Pi^{\Eq}$ denote the set of allocations with $\mathcal{M}$ at equilibrium. The PoA of $\mathcal{M}$ is defined as $\sup_S\sup_{\pi \in \Pi^{\Eq}} \frac{\LW(\pi^*)}{\LW(\pi)}$. We also define the integral PoA (I-PoA) as $\sup_S\sup_{\pi \in \Pi^{\Eq}} \frac{\LW(\pi^{I*})}{\LW(\pi)}$, which compares the liquid welfare of an equilibrium with the optimal deterministic allocation. In this paper, we study the PoA when a Nash equilibrium exists.

\section{First Price Auction}

In this section, we study the efficiency of FPA with both ROS and budget constraints. We note that all our conclusions hold for arbitrary deterministic tie-breaking rules. It is straightforward to see that $\IOPT = \OPT$ for auto-bidding without a budget constraint: the optimal allocation is to assign each query $j$ to bidder $i \in \argmax_{i \in A} v_{i, j}$. However, in the setting with budget constraints, there is a gap between $\IOPT$ and $\OPT$. We first show that $\IOPT$ can be as small as $\frac{1}{n}\OPT$. Next, we show that both the PoA and for FPA is at most $n$, and the integral PoA for FPA is at most $2$.

\begin{theorem}
\label{thm:IOPT-ub}
In auto-bidding with both ROS and budget constraints, there exists an instance with $n$ bidders that $\IOPT \le \frac{1}{n} \OPT$. 
\end{theorem}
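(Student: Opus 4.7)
The plan is to exhibit a simple instance with one query where value ties force a deterministic winner to ``waste'' budget capacity that a fractional allocation can instead spread across all bidders. The key observation is that liquid welfare is capped by each bidder's budget, so concentrating all of a high-value query on one bidder throws away the value above the cap, whereas fractionally splitting the query lets every bidder's budget cap be saturated.

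Concretely, I would consider the instance with $n$ bidders, each with budget $B_i = 1$, and a single query with $v_{i,1} = n$ for every bidder $i$. For $\IOPT$, any feasible deterministic allocation assigns the single query to exactly one bidder (or to nobody). The winning bidder contributes $\min(B_i, v_{i,1}) = \min(1, n) = 1$ to liquid welfare, and every other bidder contributes $0$. Hence $\IOPT = 1$.

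For $\OPT$, consider the fractional allocation $\pi_{i,1} = 1/n$ for all $i$, which is feasible since $\sum_i \pi_{i,1} = 1$. Each bidder's expected value is $\pi_{i,1} v_{i,1} = 1$, which equals their budget, so each bidder contributes $\min(1,1) = 1$ to liquid welfare, giving $\OPT \ge n$. (A short check that this is the maximum: for any feasible $\pi$ with $\sum_i \pi_{i,1} \le 1$, liquid welfare is $\sum_i \min(1, n\pi_{i,1}) \le \sum_i n\pi_{i,1} \le n$.) Combining the two bounds gives $\IOPT / \OPT \le 1/n$, as required.

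I do not expect any real obstacle here; the proof is essentially a one-line construction. The only point worth stating carefully is that the $\min$ in liquid welfare both (i) caps the deterministic winner at $B_i$ so only one ``unit'' of budget is realized deterministically, and (ii) is tight for each bidder simultaneously under the uniform fractional split, so all $n$ units of budget capacity are realized fractionally.
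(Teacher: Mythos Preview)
Your construction is exactly the one the paper uses: $n$ bidders, one query, $B_i=1$, $v_{i,1}=n$, with $\IOPT=1$ and $\OPT=n$ via the uniform fractional split. The argument is correct and matches the paper's proof essentially verbatim (you even add the upper bound $\OPT\le n$, which the paper omits).
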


\begin{proof}
We construct the instance as follows. There are $n$ bidders and a single query. Each bidder has a budget $B_i=1$ and value $v_{i,1}=n$ for the query. The optimal allocation is to assign the query to each bidder $i$ with $\pi_{i,1}=\frac{1}{n}$. This results in a liquid welfare of $n$. The optimal \emph{integral} allocation is to assign the query to a single arbitrary bidder. This has a liquid welfare of $1$.
\end{proof}

Since the liquid welfare of any equilibrium of FPA is at most $\IOPT$, we have the following corollary.
\begin{corollary}
\label{cor:fpa_lb}
% In $FPA$ with budgets, there exists an instance with $n$ bidders that there is an equilibrium $\Eq$ with $LW(\Eq) \le \frac{1}{n} \OPT$. 
The PoA for FPA is at least $n$ when the bidders have both budget and ROS constraints.
\end{corollary}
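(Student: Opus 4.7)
The plan is to invoke Theorem~\ref{thm:IOPT-ub} directly, using the same instance (one query, $n$ bidders, each with $B_i = 1$ and $v_{i,1} = n$), and show that FPA admits an equilibrium whose liquid welfare is exactly $\IOPT = 1$, so that the PoA ratio $\OPT / \LW(\pi) \geq n / 1 = n$.

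The key observation is that FPA allocates the single query deterministically to one bidder, so every equilibrium allocation $\pi$ for FPA lies in $\Pi^I$ and hence satisfies $\LW(\pi) \leq \IOPT$. Thus it suffices to exhibit any equilibrium of FPA on the instance from Theorem~\ref{thm:IOPT-ub}. A natural candidate is the bid profile where every bidder bids $1$; under any deterministic tie-breaking rule, some single bidder $i^*$ wins the query, pays $1 \leq B_{i^*}$, and satisfies ROS since $v_{i^*,1} = n \geq 1$.

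To confirm this is an equilibrium, I would check the two equilibrium conditions from Section~2.2. Feasibility is immediate from the previous paragraph. For the no-profitable-deviation condition, any non-winning bidder $i$ can only improve their value by winning the query, which would require bidding strictly above $1$; but then the required payment would exceed $B_i = 1$, violating their budget constraint. The winner $i^*$ already wins the only query, so has no beneficial deviation either. Hence the profile is an equilibrium with $\LW(\pi) = 1$.

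Combining these ingredients gives $\OPT / \LW(\pi) \geq n / 1 = n$ on this instance, which establishes the lower bound. I do not anticipate any obstacles: the argument is essentially a one-line corollary, with the only subtlety being the explicit verification that an equilibrium exists on the lower-bound instance so that $\Pi^{\Eq}$ is nonempty and the supremum in the definition of PoA is achieved.
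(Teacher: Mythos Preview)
Your proposal is correct and follows essentially the same approach as the paper: both rely on the observation that any FPA equilibrium allocation is deterministic, hence has liquid welfare at most $\IOPT$, and then invoke the instance from Theorem~\ref{thm:IOPT-ub}. Your explicit verification that the all-bid-$1$ profile is an equilibrium is a useful addition that the paper glosses over in the corollary itself (it only addresses equilibrium existence in the subsequent paragraph on randomized tie-breaking, with a slightly modified instance).
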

\paragraph{Randomized tie-breaking.}
Remember our conclusions in this section hold for arbitrary deterministic tie-breaking rules for FPA. Note that we can adjust the example in the proof of Theorem~\ref{thm:IOPT-ub} to change bidder $1$'s budget to $B_1 = 2 + \eps$, where $\eps \to 0$. Consider a case that every bidder bids $b_{i, 1} = B_i$. It is easy to see this is an Equilibrium with deterministic or randomized tie-breaking rules. Therefore, the PoA is still at least $n/2$ for FPA also with randomized tie-breaking rules.\\
% \wennan{Do we need to emphasize this is for "FPA with budget"? }
% \chris{Yeah, good idea.}

Next, we show that the PoA of FPA is at most $n$, i.e. the lower bound example is tight.

\begin{theorem}
\label{thm:fpa_ub}
%In FPA, the liquid welfare of any equilibrium is at least $\frac{1}{n} \OPT$.
The PoA for FPA with both ROS and budget constraints is at most $n$.
\end{theorem}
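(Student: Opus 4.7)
I plan to bound $\OPT(i)$ bidder-by-bidder using a single deviation per bidder, then sum. Fix an equilibrium with deterministic allocation $\pi$ and fix an optimal (fractional) allocation $\pi^*$. Write $O(i) = \{j : \pi^*_{i,j} > 0\}$. Consider the following deviation by bidder $i$: keep the current bid on every $j \in N(i)$, bid $\Spend(j) + \eps$ on every $j \in O(i) \setminus N(i)$, and bid $0$ elsewhere. Since FPA awards each query to the highest bidder, this deviation has bidder $i$ win exactly $N(i) \cup O(i)$, gaining value $\sum_{j \in N(i) \cup O(i)} v_{i,j}$ and paying $\Spend(i) + \sum_{j \in O(i) \setminus N(i)}(\Spend(j) + \eps)$.

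Applying the equilibrium condition and letting $\eps \to 0$, for each $i$ one of two cases must hold. Case~(a): the deviation's value does not exceed the current value, which by non-negativity of values forces $v_{i,j} = 0$ for every $j \in O(i) \setminus N(i)$; combined with $\pi^*_{i,j} \le 1$, this gives $\sum_j \pi^*_{i,j} v_{i,j} \le \sum_{j \in N(i)} v_{i,j}$ and hence $\OPT(i) \le \LW(i)$. Case~(b): the deviation violates a constraint even in the limit, yielding $\Spend(i) + \sum_{j \in O(i) \setminus N(i)} \Spend(j) \ge \min\{B_i, \sum_{j \in N(i) \cup O(i)} v_{i,j}\} \ge \OPT(i)$, where the last step uses $\sum_{j \in N(i) \cup O(i)} v_{i,j} \ge \sum_j \pi^*_{i,j} v_{i,j}$.

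Partition $A$ into $A_a$ (bidders satisfying case~(a)) and $A_b = A \setminus A_a$. Summing the per-bidder bounds,
\[
\OPT \le \sum_{i \in A_a} \LW(i) + \sum_{i \in A_b} \Spend(i) + \sum_{i \in A_b} \sum_{j \in O(i) \setminus N(i)} \Spend(j).
\]
Equilibrium feasibility $\Spend(i) \le \LW(i)$ lets the first two sums be absorbed into $\sum_{i \in A} \LW(i) = \LW(\pi)$. For the third sum, observe that for each query $j$ the (at most one) equilibrium winner $i^*$ has $j \in N(i^*)$ and thus $i^* \notin \{i : j \in O(i) \setminus N(i)\}$, so that set has size at most $n-1$; hence the third sum is at most $(n-1) \sum_j \Spend(j) \le (n-1)\LW(\pi)$, where the final inequality sums per-bidder feasibility. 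Combining gives $\OPT \le n \cdot \LW(\pi)$.

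The main subtlety is extracting the factor $n-1$ rather than $n$ on the cross-counting term, which crucially uses that the unique equilibrium winner of each query cannot contribute to $O(i) \setminus N(i)$; without this observation the argument only yields the weaker bound $(n+1)\LW(\pi)$. A secondary care point is structuring the deviation so bidder $i$ retains all of $N(i)$, so the extra deviation cost equals exactly $\sum_{j \in O(i) \setminus N(i)} \Spend(j)$ and the $\Spend(i)$ term in case~(b) absorbs cleanly against the $\LW(i)$ term in case~(a) via per-bidder feasibility.
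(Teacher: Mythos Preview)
Your argument is correct. The deviation to win all of $N(i)\cup O(i)$ simultaneously, the two-case split based on the equilibrium condition, the use of $\Spend(i)\le\LW(i)$ to merge cases (a) and (b) into a single $\LW(\pi)$ term, and the $(n-1)$ counting all go through as written.

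The paper's proof takes a different route. It partitions bidders into three sets: $A_B$ (those with $B_i\le\sum_{j\in N(i)}v_{i,j}$, for which $\LW(i)=B_i\ge\OPT(i)$ trivially), $A_{\bB1}$ (bidders in $A\setminus A_B$ for which some single query $j'\notin N(i)$ would push their value above $B_i$), and $A_{\bB0}$ (the rest). For $A_{\bB1}$ it uses a one-query deviation on the ``big'' query $j'$ to show $\Spend(j')\ge B_i-\sum_{j\in N(i)}v_{i,j}$; for $A_{\bB0}$ it uses per-query deviations to show $\Spend(j')\ge v_{i,j'}$ for each $j'\notin N(i)$. Both lemmas yield bounds of the form $\LW+\sum_{j\notin N(i)}\Spend(j)\ge\OPT$ over the respective sets, and the $(n-1)$ counting at the end is the same as yours. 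Your single multi-query deviation collapses this three-way case analysis into one step and directly produces the constraint-violation inequality with $\min\{B_i,\sum_{j\in N(i)\cup O(i)}v_{i,j}\}$ on the right, which already dominates $\OPT(i)$; this is cleaner and slightly tighter (you sum only over $O(i)\setminus N(i)$ rather than all of $Q\setminus N(i)$, though the slack is not used). The paper's decomposition, on the other hand, is modular in a way that lets it be reused almost verbatim for the I-PoA bound (Theorem~\ref{thm:fpa_IOPT_ub}), where restricting to $j\in O(i)\setminus N(i)$ with \emph{integral} $O(i)$ recovers the factor $2$ instead of $n$.
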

\begin{proof}

We split the bidders into two sets:
\begin{align*}
A_B = \{i \in A \text{ } | \text{ } B_i \le \sum_{j \in N(i)} v_{i,j}\} \text{ and }
A_{\bB} = A \setminus A_B.
\end{align*}
Note that, by the definition of liquid welfare, for $i \in A_B$, we have $\LW(i) = B_i \geq \OPT(i)$.
Thus, we have
\begin{equation}
\label{eq:FPA-A_B}
    \LW(A_B) \ge \OPT(A_B).
\end{equation}
We further split $A_{\bB}$ in to two sets.
We define $A_{\bB1}$ as the set of bidders $i$ for which there exists at least one query $j \notin N(i)$ such that if bidder $i$ wins query $j$ then, in addition to the queries they are currently winning, their total value would exceed $B_i$.
% that $i$'s welfare in the equilibrium plus $v_{i,j}$ is at least $B_i$.
In other words, if bidder $i$ gets this extra query $j$, then they will achieve their optimal liquid welfare $B_i$.
Next, we define $A_{\bB0} = A_{\bB} \setminus A_{\bB1}$.
These are the bidders for which we can add any query not in $N(i)$ and their total value would still be at most $B_i$.
Formally, the sets are defined as
\begin{align*}
A_{\bB1} = \{i \in A_{\bB} \text{ } | \text{ } \exists j \notin N(i), s.t. \text{ } v_{i,j} + \sum_{j' \in N(i)} v_{i,j'} \ge B_i \} \quad\text{and}\quad A_{\bB0} = A_{\bB} \setminus A_{\bB1}.
\end{align*}

We first bound $\OPT(A_{\bB1})$ and $\OPT(A_{\bB0})$ in Lemma~\ref{lemma:FPA-A_B1} and Lemma~\ref{lemma:FPA-A_B0} below, and then put everything together using the constraint that the total spend in the auction is upper bounded by the total liquid welfare.
\begin{lemma}
\label{lemma:FPA-A_B1}
We have $\sum_{i \in A_{\bB1}} \sum_{j \notin N(i)} \Spend(j) + \LW(A_{\bB1}) \ge \OPT(A_{\bB1})$.
\end{lemma}
\begin{proof}
Fix any $i \in A_{\bB1}$ and consider a query $j' \notin N(i)$ such that $v_{i,j'} + \sum_{j \in N(i)} v_{i,j} \ge B_i$. Note that such a query $j'$ must exist by definition of $A_{\bB1}$.
We claim that $\Spend(j') \geq B_i - \sum_{j \in N(i)} v_{i,j}$ in any equilibrium of FPA.
We will prove this by contradiction, so assume $\Spend(j') < B_i - \sum_{j \in N(i)} v_{i,j}$.
Then the highest bid on $j'$ is strictly less than $B_i - \sum_{j \in N(i)} v_{i,j}$.
Now, observe that if bidder $i$ bids $B_i - \sum_{j \in N(i)} v_{i,j}$ on query $j'$ then bidder $i$ would win query $j'$.
Their total value would then be at least $B_i$ and their total spend would be at most $B_i$.
Thus their value has increased while both their budget and ROS constraints remain satisfied.
This contradicts the assumption that the bidders are in an equilibrium.
% , having a total spend at most $B_i$, and total value at least $B_i$. i.e., bidder $i$ can increase their value while satisfying their constraints, which contradicts the assumption that the bidders are in an equilibrium.
We conclude that there exists $j'$, such that $\Spend(j') \ge B_i - \sum_{j \in N(i)} v_{i,j}$ for all bidders $i \in A_{\bB1}$.
Using the trivial upper bound $\Spend(j') \leq \sum_{j \in O(i) \setminus N(i)} \Spend(j)$ (since $j' \in O(i) \setminus N(i)$), this implies that $B_i \leq \sum_{j \in O(i) \setminus N(i)} \Spend(j) + \sum_{j \in N(i)} v_{i,j}$ for $i \in A_{\bB1}$.
We thus have that,
\begin{align*}
    \OPT(A_{\bB1})
     \leq \sum_{i \in A_{\bB1}} B_i
    & \leq \sum_{i \in A_{\bB1}} \sum_{j \notin N(i)} \Spend(j) + \sum_{i \in A_{\bB1}} \sum_{j \in N(i)} v_{i,j} \\
    & \leq \sum_{i \in A_{\bB1}} \sum_{j \notin N(i)} \Spend(j) + \LW(A_{\bB1}),
\end{align*}
as claimed.
Note that for the first line, we used $\OPT(A_{\bB1}) = \sum_{i \in A_{\bB1}} \min \{ B_i, \sum_{j} v_{i,j} \cdot \pi_{i,j}^*\} \leq \sum_{i \in A_{\bB1}} B_i$.
\end{proof}

\begin{lemma}
\label{lemma:FPA-A_B0}
We have $\sum_{i \in A_{\bB0}} \sum_{j \notin N(i)} \Spend(j) + \LW(A_{\bB0}) \ge \OPT(A_{\bB0})$.
\end{lemma}

\begin{proof}
%\chrisnote{The proof is similar to the proof of Lemma~\ref{lemma:FPA-A_B1}.}
The proof is similar to the proof of Lemma~\ref{lemma:FPA-A_B1}.
For $i \in A_{\bB0}$, we claim that for each query $j' \notin N(i)$, $\Spend(j')$ is at least $v_{i,j'}$. Again, we prove this by contradiction. Assume there exists $j' \in O(i) - N(i)$, such that $\Spend(j') < v_{i,j'}$. Then bidder $i$ can bid $v_{i,j'}$ to win query $j'$.
Note that
\[
    c_{i,j'} + \sum_{j \in N(i)} c_{i,j}
    \leq v_{i,j'} + \sum_{j \in N(i)} c_{i,j}
    < B_i,
\]
where the first inequality uses that bidder $i$ bids and pays $v_{i,j'}$ on query $j'$ and that bidder $i$'s ROS constraint was initially satisfied and the second is by definition of $A_{\bB0}$.
But this shows that bidder $i$ can improve their value without violating their constraints,
which contradicts the assumption that the bidders are in an equilibrium. Thus, it must be $\Spend(j') \ge v_{i,j'}$ for all $i \in A_{\bB0}$ and $j' \notin N(i)$.
In particular, $\sum_{j \notin N(i)} \Spend(j) \geq \sum_{j \notin N(i)} v_{i,j}$.
Following a similar argument as in Lemma~\ref{lemma:FPA-A_B1}, we have that
\begin{align*}
    \OPT(A_{\bB0})
    \leq \sum_{i \in A_{\bB0}} \sum_{j \in Q} v_{i,j}
    & = \sum_{i \in A_{\bB0}} \sum_{j \notin N(i)} v_{i,j} + \sum_{i \in A_{\bB0}} \sum_{j \in N(i)} v_{i,j} \\
    & \leq \sum_{i \in A_{\bB0}} \sum_{j \notin N(i)} \Spend(j) + \LW(A_{\bB0}),
\end{align*}
where the first inequality is because $\OPT(A_{\bB0}) = \sum_{i \in A_{\bB0}} \min \{ B_i, \sum_{j} v_{i,j} \cdot \pi_{i,j}^*\} \leq \sum_{j \in Q} v_{i,j}$.
\end{proof}

We now return to the proof of Theorem~\ref{thm:fpa_ub}.
% Since each bidder's spend is at most their liquid welfare, we have that
% \begin{align*}
%     \LW(A) &\ge \sum_{i \in A} \sum_{j \notin N(i)} \Spend(j) \\
%     &\ge \sum_{i \in A_{\bB1}} \sum_{j \notin N(i)} \Spend(j) 
%     + \sum_{i \in A_{\bB0}} \sum_{j \notin N(i)} \Spend(j)
% \end{align*}
Let $N^{-1}(j)$ denote the advertiser that wins query $j$. Combining Inequality \eqref{eq:FPA-A_B}, Lemma~\ref{lemma:FPA-A_B1} and Lemma~\ref{lemma:FPA-A_B0}, we have
\begin{align}
    \OPT &= \OPT(A_B) + \OPT(A_{\bB1}) + \OPT(A_{\bB0}) \notag  \\
    &\le \LW(A_B) + \sum_{i \in A_{\bB1}} \sum_{j \notin N(i)} \Spend(j) + \LW(A_{\bB1})\notag \\
    &+ \LW(A_{\bB0}) + \sum_{i \in A_{\bB0}} \sum_{j \notin N(i)} \Spend(j) \notag \\
    &= \LW(A_B) + \LW(A_{\bB1}) + \LW(A_{\bB0}) + \sum_{i \in A_{\bB}} \sum_{j \notin N(i)} \Spend(j) \notag \\
    & \leq \LW(A) + \sum_{j \in Q} \sum_{i \in A \setminus N^{-1}(j)} \Spend(j) \label{eq:swap_order_sum_fpa} \\
    & = \LW(A) + (n-1) \cdot \sum_{j \in Q} \Spend(j) \label{eq:fpa_trivial_sum} \\
    & \leq n \cdot \LW(A). \label{eq:fpa_upperbound_lw}
\end{align}

In Eq.~\eqref{eq:swap_order_sum_fpa}, we swapped the order of the sum.
In Eq.~\eqref{eq:fpa_trivial_sum}, we used that $|A \setminus N^{-1}(j)| = n-1$ (each query is assigned to exactly one bidder).
In Eq.~\eqref{eq:fpa_upperbound_lw}, we used that the liquid welfare is an upper bound on the total spend.
\end{proof}

% \begin{remark}
% Theorem~\ref{thm:IOPT-ub} and Theorem~\ref{thm:fpa_ub} show that introducing a budget constraint for the bidders results in a PoA of $n$ for FPA. In contrast, if there was no budget constraint, the PoA of FPA would be $2$. One may wonder what happens in the setting with only a budget constraint. Suppose the welfare of bidder $i$ is defined as $\LW(i) = \min\{B_i, \sum_{j \in Q} \pi_{i,j} \cdot v_{i,j}\}$, which is the same as in our setting. The following example shows that PoA for FPA is unbounded for value maximizers only with a budget constraint. Suppose there are $2$ bidders and a single query. Bidder $1$ has $B_1 = 1$ and $v_{1,1}=1$. Bidder $2$ has $B_2 = 2$ and $v_{2,1}=\eps$. Both bidders bid their budgets. Bidder $2$ wins the query and pays $2$. This is an equilibrium because no one can deviate from their strategy to win more value while satisfying their constraint. The total welfare is $\eps$ and the optimal welfare is $1$ by allocating the query to bidder $1$ with probability $1$. As $\eps \to 0$, we see that the PoA is unbounded. 
% \end{remark}

Although the PoA of FPA is large because of the large gap between $\OPT$ and $\IOPT$, the I-PoA of FPA with both ROS and budget constraints is at most 2, which is the same as in the setting with only a ROS constraint.

\begin{theorem}
\label{thm:fpa_IOPT_ub}
%In FPA, the liquid welfare of any equilibrium is at least $\frac{1}{2}\IOPT$.
The I-PoA for FPA with both ROS and budget constraints is at most 2.
\end{theorem}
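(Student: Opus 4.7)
The plan is to mirror the structure of the proof of Theorem~\ref{thm:fpa_ub}, with two key modifications. First, I would compare against $\IOPT$ rather than $\OPT$, so the relevant queries to reason about are those in $O(i) \setminus N(i)$ rather than all $j \notin N(i)$. Second, and crucially, since $\IOPT$ is deterministic, each query $j$ belongs to at most one set $O(i)$. This means that when we sum per-bidder bounds and swap the order of summation, the $(n-1)$ factor that appeared in Equation~\eqref{eq:fpa_trivial_sum} becomes $1$, yielding the bound $2\LW(A)$ instead of $n\LW(A)$.

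Following the same partition as in Theorem~\ref{thm:fpa_ub}, I would set $A_B = \{i : B_i \leq \sum_{j \in N(i)} v_{i,j}\}$ and $A_{\bB} = A \setminus A_B$. For $i \in A_B$, we immediately have $\IOPT(i) \leq B_i = \LW(i)$. For $i \in A_{\bB}$, I would prove the $\IOPT$-analogues of Lemmas~\ref{lemma:FPA-A_B1} and~\ref{lemma:FPA-A_B0} by splitting $A_{\bB}$ based on whether there exists $j' \in O(i) \setminus N(i)$ with $v_{i,j'} + \sum_{j \in N(i)} v_{i,j} \geq B_i$ (the ``overflow'' case). For such an overflow query, bidder $i$ can deviate to bid $B_i - \sum_{j \in N(i)} v_{i,j}$ on $j'$, which, if it wins, pushes $i$'s liquid welfare up to $B_i$ without violating either constraint, so in equilibrium $\Spend(j') \geq B_i - \sum_{j \in N(i)} v_{i,j}$. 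For bidders in the ``non-overflow'' case, bidding $v_{i,j'}$ on any $j' \in O(i) \setminus N(i)$ is always budget-feasible, hence $\Spend(j') \geq v_{i,j'}$. In both sub-cases one obtains the uniform per-bidder bound
\[
    \IOPT(i) \leq \LW(i) + \sum_{j' \in O(i) \setminus N(i)} \Spend(j').
\]

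Summing over all $i \in A$ and combining with the $A_B$ bound gives $\IOPT \leq \LW(A) + \sum_{i \in A} \sum_{j' \in O(i) \setminus N(i)} \Spend(j')$. Because $\IOPT$ is a deterministic allocation, each query $j$ lies in at most one $O(i)$, so swapping the order of summation bounds the double sum by $\sum_{j \in Q} \Spend(j)$. Finally, since each bidder's total spend is at most $\min\{B_i, \sum_{j \in N(i)} v_{i,j}\} = \LW(i)$ (by their ROS and budget constraints), we have $\sum_{j \in Q} \Spend(j) \leq \LW(A)$, yielding $\IOPT \leq 2\LW(A)$.

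The main obstacle is the case analysis for $i \in A_{\bB}$, in particular verifying that the ``bid the slack'' deviation $B_i - \sum_{j \in N(i)} v_{i,j}$ is strictly positive (which uses $i \in A_{\bB}$) and at most $v_{i,j'}$ in the overflow case (which uses $v_{i,j'} + \sum_{j \in N(i)} v_{i,j} \geq B_i$), so that both ROS and budget remain satisfied while the total value strictly increases. Everything else is a straightforward adaptation of the template established in Theorem~\ref{thm:fpa_ub}, with the $(n-1)$ factor reduced to $1$ thanks to the determinism of $\IOPT$.
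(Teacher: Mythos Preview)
Your proposal is correct and follows essentially the same approach as the paper's proof: the same partition into $A_B$, $A_{\bB1}$, $A_{\bB0}$, the same deviation arguments (bid the slack $B_i - \sum_{j \in N(i)} v_{i,j}$ on an overflow query, bid $v_{i,j'}$ on each non-overflow query), and the same final step using that the $O(i)$ are disjoint so $\sum_i \sum_{j \in O(i)\setminus N(i)} \Spend(j) \leq \sum_{j \in Q} \Spend(j) \leq \LW(A)$. The only cosmetic difference is that you package the two sub-cases into a single per-bidder inequality before summing, whereas the paper sums each sub-case separately as Lemmas~\ref{lemma:A_B1} and~\ref{lemma:A_B0}.
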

\begin{proof}
We split the bidders into two sets:
\begin{align*}
A_B = \{i \in A \text{ } | \text{ } B_i \le \sum_{j \in N(i)} v_{i,j}\} \text{ and }
A_{\bB} = A \setminus A_B.
\end{align*}
Note that, by the definition of liquid welfare, for $i \in A_B$, we have $\LW(i) = B_i \geq \IOPT(i)$.
Thus, we have
\begin{equation}
\label{eq:A_B}
    \LW(A_B) \ge \IOPT(A_B).
\end{equation}
We further split $A_{\bB}$ in to two sets.
We define $A_{\bB1}$ as the set of bidders $i$ for which there exists at least one query $j \in O(i) \setminus N(i)$ such that if bidder $i$ wins query $j$ as well then their total value would be at least $B_i$.
% that $i$'s welfare in the equilibrium plus $v_{i,j}$ is at least $B_i$.
In other words, if bidder $i$ gets this extra query $j$, then they will achieve their optimal liquid welfare $B_i$.
Next, we define $A_{\bB0} = A_{\bB} \setminus A_{\bB1}$.
These are the bidders for which we can add any query in $O(i) \setminus N(i)$ and their total value would still be at most $B_i$.
Formally, the sets are defined as
\begin{align*}
A_{\bB1} = \{i \in A_{\bB} \text{ } | \text{ } \exists j \in O(i) \setminus N(i), s.t. \text{ } v_{i,j} + \sum_{j' \in N(i)} v_{i,j'} \ge B_i \} \quad\text{and}\quad A_{\bB0} = A_{\bB} \setminus A_{\bB1}.
\end{align*}

We first bound $\IOPT(A_{\bB1})$ and $\IOPT(A_{\bB0})$ in Lemma~\ref{lemma:A_B1} and Lemma~\ref{lemma:A_B0} below, and then put everything together using the constraint that the total spend in the auction is upper bounded by the total liquid welfare.
\begin{lemma}
\label{lemma:A_B1}
We have $\sum_{i \in A_{\bB1}} \sum_{j \in O(i) \setminus N(i)} \Spend(j) + \LW(A_{\bB1}) \ge \IOPT(A_{\bB1})$.
\end{lemma}
\begin{proof}
Fix any bidder $i \in A_{\bB1}$ and consider a query $j' \in O(i) \setminus N(i)$ such that $v_{i,j} + \sum_{j \in N(i)} v_{i,j} \ge B_i$ (such a query $j'$ must exist by definition of $A_{\bB1}$).
We claim that $\Spend(j') \geq B_i - \sum_{j \in N(i)} v_{i,j}$ in any equilibrium of FPA.
We will prove this by contradiction, so assume $\Spend(j') < B_i - \sum_{j \in N(i)} v_{i,j}$.
Then the highest bid on $j'$ must be at most $B_i - \sum_{j \in N(i)} v_{i,j}$.
Now, observe that if bidder $i$ bids $B_i - \sum_{j \in N(i)} v_{i,j}$ on query $j'$ then bidder $i$ would win query $j'$.
Their total value would then be $B_i$ and their total spend would be at least $B_i$.
Thus their value has increased while both their budget and ROS constraints are satisfied.
This contradicts the assumption that the bidders are in equilibrium.
% , having a total spend at most $B_i$, and total value at least $B_i$. i.e., bidder $i$ can increase their value while satisfying their constraints, which contradicts the assumption that the bidders are in an equilibrium.
We conclude that $\Spend(j') \ge B_i - \sum_{j \in N(i)} v_{i,j}$ for all bidders $i \in A_{\bB1}$.
Using the trivial upper bound $\Spend(j') \leq \sum_{j \in O(i) \setminus N(i)} \Spend(j)$ (since $j' \in O(i) \setminus N(i)$), this implies that $\sum_{j \in O(i) \setminus N(i)} \Spend(j) + \sum_{j \in N(i)} v_{i,j}$ for $i \in A_{\bB1}$.
We thus have that,
\begin{align*}
    \IOPT(A_{\bB1})
    & \leq \sum_{i \in A_{\bB1}} B_i \\
    & \leq \sum_{i \in A_{\bB1}} \sum_{j \in O(i) \setminus N(i)} \Spend(j) + \sum_{i \in A_{\bB1}} \sum_{j \in N(i)} v_{i,j} \\
    & \leq \sum_{i \in A_{\bB1}} \sum_{j \in O(i) \setminus N(i)} \Spend(j) + \LW(A_{\bB1}),
\end{align*}
as claimed.
Note that the first line is because $\IOPT(A_{\bB1}) = \sum_{i \in A_{\bB1}} \min \{ B_i, \sum_{j \in O(i)} v_{i,j}\} \leq \sum_{i \in A_{\bB1}} B_i$.
\end{proof}

\begin{lemma}
\label{lemma:A_B0}
We have $\sum_{i \in A_{\bB0}} \sum_{j \in O(i) \setminus N(i)} \Spend(j) + \LW(A_{\bB0}) \ge \IOPT(A_{\bB0})$.
\end{lemma}

\begin{proof}
%\chrisnote{The proof is similar to the proof of Lemma~\ref{lemma:A_B1}.}
The proof is similar to the proof of Lemma~\ref{lemma:A_B1}.
For $i \in A_{\bB0}$, we claim that for each query $j' \in O(i) \setminus N(i)$, $\Spend(j')$ is at least $v_{i,j'}$. Again, we prove this by contradiction. Assume there exists $j' \in O(i) \setminus N(i)$, such that $\Spend(j') < v_{i,j'}$. Then bidder $i$ can bid $v_{i,j'}$ to win query $j'$.
Note that
\[
    c_{i,j'} + \sum_{j \in N(i)} c_{i,j}
    \leq v_{i,j'} + \sum_{j \in N(i)} c_{i,j}
    < B_i,
\]
where the first inequality uses that bidder $i$ bids and pays $v_{i,j'}$ on query $j'$ and that bidder $i$'s ROS constraint was initially satisfied and the second is by definition of $A_{\bB0}$.
But this shows that bidder $i$ can improve their value without violating their constraints,
% By definition of $A_{\bB0}$, it must be $v_{i,j'} + \sum_{j \in N(i)} v_{i,j} < B_i$.
% Thus, bidder $i$'s budget constraint must 
% so $i$ would not break their budget constraint $B_i$, and also would not break their value constraint because both their spend and value increase by $v_{i,j}$.
which contradicts the assumption that the bidders are in an equilibrium. Thus, it must be $\Spend(j') \ge v_{i,j'}$ for all $i \in A_{\bB0}$ and $j' \in O(i) \setminus N(i)$.
In particular, $\sum_{j \in O(i) \setminus N(i)} \Spend(j) \geq \sum_{j \in O(i) \setminus N(i)} v_{i,j}$.
Following a similar argument as in Lemma~\ref{lemma:A_B1}, we have that
\begin{align*}
    \IOPT(A_{\bB0})
    & \leq \sum_{i \in A_{\bB0}} \sum_{j \in O(i)} v_{i,j} \\
    & = \sum_{i \in A_{\bB0}} \sum_{j \in O(i) \setminus N(i)} v_{i,j} + \sum_{i \in A_{\bB0}} \sum_{j \in N(i)} v_{i,j} \\
    & \leq \sum_{i \in A_{\bB0}} \sum_{j \in O(i) \setminus N(i)} \Spend(j) + \LW(A_{\bB0}).
\end{align*}
The first inequality is because $\IOPT(A_{\bB0}) = \sum_{i \in A_{\bB0}} \min \{ B_i, \sum_{j \in O(i)} v_{i,j}\} \leq \sum_{i \in A_{\bB0}} \sum_{j \in O(i)} v_{i,j}$.
\end{proof}

% Summing up for all bidders in $A_{\bB0}$:

% \begin{align*}
% \sum_{i \in A_{\bB0}} \sum_{j \in O(i) - N(i)} \Spend(j)
% &\ge \sum_{i \in A_{\bB0}} \sum_{j \in O(i) - N(i)} v_{i,j}   \\
% \LW(A_{\bB0}) + \sum_{i \in A_{\bB0}} \sum_{j \in O(i) - N(i)} \Spend(j)
% &\ge \sum_{i \in A_{\bB0}} \sum_{j \in N(i)} v_{i,j} + \sum_{i \in A_{\bB0}} \sum_{j \in O(i) - N(i)} v_{i,j}   \\
% &\ge \sum_{i \in A_{\bB0}} \sum_{j \in O(i) } v_{i,j} \\
% &\ge \IOPT(A_{\bB0})
% \end{align*}

% The last line holds because $\IOPT(A_{\bB0}) = \sum_{i \in A_{\bB0}} \min \{ B_i, \sum_{j \in O(i)} v_{i,j}\}$.

% \end{proof}

% Now we are ready to put everything together. 
We now return to the proof of Theorem~\ref{thm:fpa_IOPT_ub}.
Since each bidder's spend is at most their liquid welfare, we have that
\begin{align*}
    \LW(A) &\ge \sum_{i \in A} \sum_{j \in O(i)} \Spend(j) \\
    &\ge \sum_{i \in A_{\bB1}} \sum_{j \in O(i) \setminus N(i)} \Spend(j) 
    + \sum_{i \in A_{\bB0}} \sum_{j \in O(i) \setminus N(i)} \Spend(j)
\end{align*}

Combining Inequality \eqref{eq:A_B}, Lemma~\ref{lemma:A_B1} and Lemma~\ref{lemma:A_B0}, we have
\begin{align*}
    \IOPT &= \IOPT(A_B) + \IOPT(A_{\bB1}) + \IOPT(A_{\bB0}) \\
    &\le \LW(A_B) + \sum_{i \in A_{\bB1}} \sum_{j \in O(i) \setminus N(i)} \Spend(j) + \LW(A_{\bB1})\\
    &+ \LW(A_{\bB0}) + \sum_{i \in A_{\bB0}} \sum_{j \in O(i) \setminus N(i)} \Spend(j) \\
    &\le \LW(A_B) + \LW(A_{\bB1}) + \LW(A_{\bB0}) + \LW(A)\\
    &= 2 \LW(A),
\end{align*}
which completes the proof.
\end{proof}
Corollary 3.4 of \cite{liaw2023efficiency} shows that the I-PoA of FPA without budget is at least 2. It is straightforward to generalize it to obtain the following corollary, by assuming all the budgets are infinity. Note that the lower bound of I-PoA is also a lower bound of PoA.
\begin{corollary}
\label{cor:fpa_IOPT_lb}
% For FPA with budgets, there exists an instances that there is an equilibrium with liquid welfare at most $\frac{1}{2} \IOPT$.
The I-PoA for FPA with both ROS and budget constraints is at least 2.
\end{corollary}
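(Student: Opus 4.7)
The plan is to reduce the ROS$+$budget setting to the ROS-only setting by taking budgets to infinity, and then invoke the known lower bound from Liaw et al.~\cite{liaw2023efficiency}. Concretely, I would take the hard instance used in the proof of Corollary 3.4 of \cite{liaw2023efficiency}, which exhibits an equilibrium of FPA (with only ROS constraints) whose liquid welfare is at most $\LW \le \IOPT/2$, and lift it to our setting by setting $B_i = +\infty$ for every bidder $i$.

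With infinite budgets, the (Budget) constraint in the auto-bidder program is vacuous, so it becomes identical to the ROS-only program. In particular, (i) the set of feasible allocations is unchanged, (ii) the liquid welfare expression $\LW(\pi) = \sum_i \min\{B_i, \sum_j \pi_{i,j} v_{i,j}\}$ reduces to $\sum_i \sum_j \pi_{i,j} v_{i,j}$, and (iii) the definition of equilibrium coincides with the ROS-only equilibrium definition, since the condition $\sum_j \pi'_{i,j} c_{i,j} > \min\{B_i, \sum_j \pi'_{i,j} v_{i,j}\}$ simplifies to a pure ROS violation. Therefore any equilibrium of FPA on the Liaw et al.\ instance in the ROS-only setting remains an equilibrium of FPA on the same instance in the ROS$+$budget setting, and the liquid welfare values of the equilibrium and of $\IOPT$ (which is realized by a finite deterministic allocation, untouched by $B_i = \infty$) are preserved.

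Since the ratio $\IOPT/\LW$ on this lifted instance is at least $2$, the I-PoA of FPA with both ROS and budget constraints is at least $2$, giving the corollary.

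The only thing that needs checking carefully is the lift step: one must confirm that the bid profile from \cite{liaw2023efficiency} does not exploit anything specific to the absence of budgets in a way that would create new profitable deviations once we interpret it in the richer ROS$+$budget framework. Because taking $B_i \to \infty$ only weakens the set of constraints on each deviation and leaves the allocation/payment rules of FPA unchanged, any deviation that was unprofitable (or infeasible) in the ROS-only setting remains unprofitable (or infeasible) here; hence no new obstacle actually arises.
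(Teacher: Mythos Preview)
Your proposal is correct and matches the paper's own argument: the paper simply states that Corollary~3.4 of \cite{liaw2023efficiency} gives I-PoA $\geq 2$ for FPA without budgets, and that setting all $B_i = +\infty$ immediately transports this lower bound to the ROS$+$budget setting. Your write-up just spells out the (straightforward) verification that equilibria and liquid welfare are preserved under this lift, which is exactly the ``straightforward generalization'' the paper alludes to.
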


Next, we make a mild assumption that $v_{i,j} \leq B_i$ for all $i \in A$ and $j \in Q$, i.e.~any bidder's value for any query is no more than their budget. With this assumption, we are able to show that the PoA is at most 2 for FPA in Theorem~\ref{thm:fpa_small_v_ub}. Indeed, our assumption avoids the ``large value'' cases represented by the example in Theorem~\ref{thm:IOPT-ub} to achieve a small PoA bound.

\begin{theorem}
\label{thm:fpa_small_v_ub}
%In FPA, the liquid welfare of any equilibrium is at least $\frac{1}{n} \OPT$.
If $v_{i,j} \leq B_i$ for every $i \in A$ and $j \in Q$ then the PoA of FPA with both ROS and budget constraints is at most $2$.
\end{theorem}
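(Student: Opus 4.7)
The plan is to extend the proof of Theorem~\ref{thm:fpa_IOPT_ub} to the fractional benchmark $\OPT$, and use the assumption $v_{i,j}\le B_i$ to prevent the factor-$n$ blowup of Theorem~\ref{thm:IOPT-ub}. First I would normalize the optimal fractional allocation: without loss of generality we may assume $\sum_j\pi^*_{i,j}v_{i,j}\le B_i$ for every bidder $i$, since if this fails for some $i$ we can scale $\pi^*_{i,\cdot}$ down proportionally until equality holds (the contribution $\OPT(i)=B_i$ is unchanged and the freed query-mass can only help reallocating to other bidders). After this step, $\OPT=\sum_{i,j}\pi^*_{i,j}v_{i,j}$.

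Next I would re-use the same partition $A_B=\{i:\sum_{j\in N(i)}v_{i,j}\ge B_i\}$ and $A_{\bB}=A\setminus A_B$. For $i\in A_B$ we still have $\LW(i)=B_i\ge\OPT(i)$, so $\OPT(A_B)\le \LW(A_B)$. For $i\in A_{\bB}$ let $\beta_i=B_i-\LW(i)>0$. The single-query-deviation arguments of Lemmas~\ref{lemma:A_B1} and \ref{lemma:A_B0} go through for every $j\notin N(i)$ (not only for $j\in O(i)\setminus N(i)$, since they depend only on ROS and budget feasibility), yielding $\Spend(j)\ge\min(v_{i,j},\beta_i)$. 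Combining this with the hypothesis $v_{i,j}\le B_i=\LW(i)+\beta_i$ gives the pointwise estimate
\[
v_{i,j}\ \le\ \Spend(j)+\LW(i)\qquad\text{for every }j\notin N(i),\ i\in A_{\bB}.
\]

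I would then bound
\[
\OPT(A_{\bB})\ \le\ \LW(A_{\bB})\ +\ \sum_{i\in A_{\bB}}\sum_{j\notin N(i)}\pi^*_{i,j}v_{i,j},
\]
by splitting $\{j\notin N(i)\}$ into the ``light'' part $\{v_{i,j}\le \beta_i\}$ (where $v_{i,j}\le\Spend(j)$ directly, exactly as in Lemma~\ref{lemma:A_B0}) and the ``heavy'' part $\{v_{i,j}>\beta_i\}$ (where the pointwise estimate must be used). Swapping summation orders and using $\sum_i\pi^*_{i,j}\le 1$, the light part is dominated by $\sum_j\Spend(j)\le\LW$. For the heavy part, on every heavy $(i,j)$ pair we have $\Spend(j)\ge\beta_i$, so $v_{i,j}\le B_i\le \Spend(j)+\LW(i)$: the assumption $v_{i,j}\le B_i$ is what keeps the excess $v_{i,j}-\Spend(j)$ bounded by $\LW(i)$ per query and therefore absorbable into a second copy of $\sum_j\Spend(j)\le\LW$. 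Adding everything up yields $\OPT\le\LW(A_B)+\LW(A_{\bB})+\LW=2\LW$, hence PoA$\le 2$.

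The hardest part of the plan is precisely the ``heavy-query'' accounting in the final step: unlike the I-PoA proof, where the sets $O(i)$ partition $Q$ so each $\Spend(j)$ is charged at most once, in the fractional setting several bidders can have $\pi^*_{i,j}>0$ on the same $j$ simultaneously, and without the assumption $v_{i,j}\le B_i$ the residual $\LW(i)$-correction blows up by a factor of $n$ exactly as in Theorem~\ref{thm:IOPT-ub}. The assumption is what limits the per-query total contribution to a bounded multiple of $\Spend(j)$, and it is the only new ingredient beyond the proof of Theorem~\ref{thm:fpa_IOPT_ub}.
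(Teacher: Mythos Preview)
Your overall architecture (normalize $\pi^*$, split $A_B$ vs.\ $A_{\bB}$, obtain a pointwise deviation bound for $j\notin N(i)$, then do a weighted accounting against $\sum_j\Spend(j)$) is the right one, and your pointwise inequality
\[
\Spend(j)\ \ge\ \min\bigl(v_{i,j},\,\beta_i\bigr)\qquad\text{hence}\qquad v_{i,j}\ \le\ \Spend(j)+\LW(i)
\]
for $i\in A_{\bB}$, $j\notin N(i)$, is correct. The gap is exactly where you flag it: the heavy-query accounting does not close. After your light/heavy split the residual term is
\[
\sum_{i\in A_{\bB}}\ \LW(i)\cdot\!\!\sum_{\substack{j\notin N(i)\\ v_{i,j}>\beta_i}}\!\!\pi^*_{i,j},
\]
and nothing you have written bounds this by $\LW$ (or by $\sum_j\Spend(j)$). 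The sentence ``absorbable into a second copy of $\sum_j\Spend(j)$'' is an assertion, not an argument: the excess $v_{i,j}-\Spend(j)$ is bounded by $\LW(i)$, not by $\Spend(j)$, and $\sum_{j\ \mathrm{heavy}}\pi^*_{i,j}$ is not $\le 1$ in general (the normalization only gives $\sum_{j\ \mathrm{heavy}}\pi^*_{i,j}<B_i/\beta_i$, which can be arbitrarily large when $\LW(i)$ is close to $B_i$). So the step that would turn a PoA~$\le 3$ sketch into PoA~$\le 2$ is missing.

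The paper closes this gap with two ingredients you do not have. First, a \emph{different} single-query deviation: for $j$ with $\Spend(j)<v_{i,j}$, let bidder $i$ drop \emph{all} current queries and bid only on $j$. Feasibility of this deviation is exactly where $v_{i,j}\le B_i$ is used, and it yields the sharper bound $v_{i,j}\le V_i=\LW(i)$ (not merely $v_{i,j}\le\Spend(j)+\LW(i)$). Second, the paper does a further split of $A_{\bB}$ according to whether $\sum_{j\in\cD(i)}\pi^*_{i,j}\ge 1$ (where $\cD(i)=\{j\notin N(i):\Spend(j)<v_{i,j}\}$). When the sum is $\ge 1$, your Claim~3.8-style inequality $\Spend(j)+\Spend(i)\ge B_i$ already gives $\sum_{j\in\cD(i)}\pi^*_{i,j}\Spend(j)+\LW(i)\ge B_i\ge\OPT(i)$. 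When the sum is $<1$, the paper uses yet another deviation (drop the ``overpaying'' subset of $N(i)$ and add a single $j'\in\cD(i)$) to show that the $\cD(i)$-excess is dominated by terms already living inside $N(i)$; this is the analogue of Lemma~\ref{lemma:A_B1}/\ref{lemma:A_B0} that actually makes the accounting go through fractionally. In short, the ``add one query'' deviations from Lemmas~\ref{lemma:A_B1} and~\ref{lemma:A_B0} alone are not enough; you also need the ``drop queries and rebid'' deviations, and a case analysis on the total $\pi^*$-mass in $\cD(i)$.
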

\begin{proof}
We split the bidders into two sets:
\begin{align*}
A_B = \{i \in A \text{ } | \text{ } B_i \le \sum_{j \in N(i)} v_{i,j}\} \text{ and }
A_{\bB} = A \setminus A_B.
\end{align*}
By the definition of liquid welfare, for $i \in A_B$, we have $\LW(i) = B_i \geq \OPT(i)$.
Thus, we have
\begin{equation}
\label{eq:small_v_A_B}
    \LW(A_B) \ge \OPT(A_B).
\end{equation}

For $i \in A_{\bB}$, define $\cD(i) = \{j \notin N(i) \text{ } | \text{ } \Spend(j) < v_{i, j}\}$. Let $V_i = \sum_{j \in N(i)} v_{i, j}$ denote the total value that bidder $i$ has in the equilibrium $\Eq$. Note that $V_i = \LW(i)$ for $i \in A_{\bB}$ because $B_i \geq V_i$. For a set $N' \subseteq N(i)$, let $V_i(N') = \sum_{j \in N'} v_{i, j}$. 

\begin{claim}
\label{claim:D_spend_budget}
Fix a bidder $i$. For every $j \in \cD(i)$, we have $\Spend(j) + \Spend(i) \ge B_i$.
\end{claim}
\begin{proof}
We prove this claim by contradiction. To that end, assume there exists $j \in \cD(i)$ such that $\Spend(j) + \Spend(i) < B_i$. Since we are using a FPA, this means the highest bid on query $j$ is exactly $\Spend(j)$. Recall that $\Spend(j) < v_{i, j}$ by the definition of $\cD(i)$. Let $\eps$ be such that $0< \eps < \min\{B_i - \Spend(j) - \Spend(i), v_{i,j} - \Spend(j) \}$. Thus bidder $i$ can win query $j$ by bidding and spending $\Spend(j) + \eps$ without violating their budget constraint. Note that bidder $i$ also satisfies their ROS constraint after winning $j$. We conclude that bidder $i$ is able to obtain more value while satisfying their constraints, which contradicts the equilibrium assumption.
\end{proof}

\begin{claim}
\label{claim:D_LW_greater_j}
Fix a bidder $i$. For every $j \in \cD(i)$, we have $v_{i,j} \le V_i$.
\end{claim}
\begin{proof}
We prove this claim by contradiction. To that end assume there exists $j \in \cD(i)$, such that $v_{i,j} > V_i$. For a sufficiently small $\eps > 0$, if bidder $i$ changes their bid on query $j$ to $\Spend(j) + \eps$ and their bid on all other queries to $0$, then bidder $i$ would win query $j$ and pay $\Spend(j) + \eps$. Their total value would be $v_{i, j} > V_i$. By the condition of Theorem~\ref{thm:fpa_small_v_ub}, we know $v_{i, j} \le B_i$. By the definition of $\cD(i)$, $\Spend(j) < v_{i,j}$ and thus, $\Spend(j) + \eps < v_{i,j}$ provided $\eps$ is sufficiently small. Thus, bidder $i$ is able to get more value while satisfying both constraints, which contradicts the equilibrium assumption.
\end{proof}

We further partition $A_{\bB}$ into two sets:
\begin{align*}
A_{\bB0} = \{i \in A_{\bB} \text{ } | \text{ } \sum_{j \in \cD(i)} \pi_{i, j}^* \ge 1\} \text{ and } A_{\bB1} = A_{\bB} \setminus A_{\bB0}.
\end{align*}
We bound $\OPT(A_{\bB0})$ and $\OPT(A_{\bB1})$ separately by Lemma~\ref{lemma:FPA-small_v_A_B0} and Lemma~\ref{lemma:FPA-small_v_A_B1}.
\begin{lemma}
\label{lemma:FPA-small_v_A_B0}
We have $\sum_{i \in A_{\bB0}} \sum_{j \in \cD(i)} \pi_{i, j}^* \cdot \Spend(j) + \LW(A_{\bB0}) \ge \OPT(A_{\bB0})$.
\end{lemma}
\begin{proof}
Fix $i \in A_{\bB0}$. By Claim~\ref{claim:D_spend_budget}, we have
\[
    \sum_{j \in \cD(i)} \pi_{i, j}^* \cdot \Spend(j) + \LW(i) 
    \ge \sum_{j \in \cD(i)} \pi_{i,j}^* \cdot \Spend(j) + \Spend(i) 
    \ge B_i.
\]
Summing up this inequality for all $i \in A_{\bB0}$ yields
\[
    \sum_{i \in A_{\bB0}} \sum_{j \in \cD(i)} \pi_{i, j}^* \cdot \Spend(j) + \LW(A_{\bB0})
    \geq \sum_{i \in A_{\bB0}} B_i
    \geq \OPT(A_{\bB0}). \qedhere
\]
\end{proof}

To bound $\OPT(A_{\bB1})$, we require the following auxiliary lemma. 
\begin{lemma}
\label{lemma:D_pi_less_than_1}
Fix bidder $i \in A_{\bB1}$. If $\sum_{j \in \cD(i)} \pi_{i, j}^* < 1$ then 
\[
\sum_{j \in N(i)} (v_{i,j} \cdot (1 - \pi_{i,j}^*) + \pi_{i,j}^* \cdot \Spend(j)) \ge \sum_{j \in \cD(i)} \pi_{i,j}^* \cdot (v_{i, j} - \Spend(j)).
\]
\end{lemma}

\begin{proof}
We split $N(i)$ into two sets:
\begin{align*}
N_1 = \{j \in N(i) \text{ } | \text{ } \Spend(j) \ge v_{i,j}\} \text{ and }
N_2 = N(i) \setminus N_1.
\end{align*}

We first show the following claim to bound $V_i(N_1) + \Spend(N_2)$.
\begin{claim}
\label{claim:N1_N2}
$\sum_{j \in N(i)} (v_{i,j} \cdot (1 - \pi_{i,j}^*) + \pi_{i,j}^* \cdot \Spend(j)) \ge V_i(N_1) + \Spend(N_2)$.
\end{claim}
\begin{proof}
Observe that
\[
    v_{i,j} \cdot (1 - \pi_{i,j}^*) + \pi_{i,j}^* \cdot \Spend(j) \geq
    \begin{cases}
    v_{i,j} & j \in N_1 \\
    \Spend(j) & j \in N_2
    \end{cases}.
\]
The claim follows by summing this inequality for $j \in N(i)$.
\end{proof}

Fix a query $j' \in \cD(i)$. Since bidder $i$ is in an equilibrium, it means bidder $i$ cannot get more value, while satisfying their constraints, by giving up queries in $N_1$ and bidding high enough to win query $j'$. In other words, if bidder $i$ deviates by bidding 0 on all queries in $N_1$ and bidding $\Spend(j')$ on query $j'$ to win query $j'$ then at least one of the following two statements is true: (1) the value that bidder $i$ loses from $N_1$ is at least  the value $i$ gains by winning $j'$ or (2) bidder $i$ violates their constraints. We analyze each case below.

\paragraph{Case 1: $v_{i,j'} \le V_i(N_1)$.}
In this case, bidder $i$ gives up all queries in $N_1$ and bids $\Spend(j')$ to win $j'$, but bidder $i$'s total value decreases after the switch. In other words $V_i(N_1) \ge v_{i,j'} \ge v_{i,j'} - \Spend(j)$.

\paragraph{Case 2: bidder $i$ violates at least one of their constraints.}
It is not possible to violate their ROS constraint because $\Spend(N_1) \ge V_i(N_1)$ and $\Spend(j') < v_{i, j'}$. Thus bidder $i$ must violate their budget constraint. Bidder $i$ does not spend anything on $N_1$ now because bidder $i$ bids 0 on all queries in $N_1$. The total spend of bidder $i$ after the strategy deviation is $\Spend(j') + \Spend(N_2)$, which must be greater than $B_i$. Combining with the condition that $v_{i, j'} \le B_i$, we have
\[
\Spend(N_2) > B_i - \Spend(j') \geq v_{i,j'} - \Spend(j').
\]

Combining Case 1 and Case 2 and Claim~\ref{claim:N1_N2}, we have that for every $j' \in \cD(i)$:

\[
    \sum_{j \in N(i)} (v_{i,j} \cdot (1 - \pi_{i,j}^*) + \pi_{i,j}^* \cdot \Spend(j)) 
    \geq V_i(N_1) + \Spend(N_2)
    \geq v_{i,j'} - \Spend(j').
\]
% \begin{align*}
%     V_i(N_1) + \Spend(N_2) &>  v_{i,j'} - \Spend(j') \\
%     \sum_{j \in N(i)} (v_{i,j} \cdot (1 - \pi_{i,j}^*) + \pi_{i,j}^* \cdot \Spend(j)) &>  v_{i,j'} - \Spend(j').
% \end{align*}
 
Since $\sum_{j' \in \cD(i)} \pi_{i, j'}^* < 1$ and $v_{i,j} > \Spend(j')$ for $j \in \cD(i)$, we conclude that
\[
\sum_{j \in N(i)} v_{i,j} \cdot (1 - \pi_{i,j}^*) + \pi_{i,j}^* \cdot \Spend(j) 
% &\ge \sum_{j' \in \cD(i)} \pi_{i,j'}^* \cdot (v_{i, j'} - \Spend(j')) \\
\geq \sum_{j \in \cD(i)} \pi_{i,j}^* \cdot (v_{i, j} - \Spend(j)). \qedhere
\]
\end{proof}

\begin{lemma}
\label{lemma:FPA-small_v_A_B1}
We have $\OPT(A_{\bB1}) \le \LW(A_{\bB1}) + \sum_{i \in A_{\bB1}} \sum_{j \in Q} \Spend(j) \cdot \pi_{i,j}^*$
\end{lemma}
\begin{proof}
Fix $i \in A_{\bB1}$. We first split  $\OPT(i)$ into three parts and then bound each term separately. Indeed, for all $i \in A_{\bB1}$, we have
\begin{align*}
    \OPT(i) \le  \sum_{j \in Q} v_{i,j} \cdot \pi_{i,j}^*
    &= \sum_{j \in N(i)} v_{i,j} \cdot \pi_{i,j}^* + \sum_{j \in \cD(i)} v_{i,j} \cdot \pi_{i,j}^* + \sum_{j \in Q \setminus N(i) \setminus \cD(i)} v_{i,j} \cdot \pi_{i,j}^* \\
    &= \sum_{j \in N(i)} v_{i,j} \cdot \pi_{i,j}^*
    + (\sum_{j \in \cD(i)} (v_{i,j} - \Spend(j)) \cdot \pi_{i,j}^*
    + \sum_{j \in \cD(i)} \Spend(j) \cdot \pi_{i,j}^*) \\
    &+ \sum_{j \in Q \setminus N(i) \setminus \cD(i)} v_{i,j} \cdot \pi_{i,j}^*.
\end{align*}

We know that if $j \in Q \setminus N(i) \setminus \cD(i)$ we have $\Spend(j) \ge v_{i, j}$. Combining with Lemma~\ref{lemma:D_pi_less_than_1} gives that
\begin{align*}
    \OPT(i)
    \le &\sum_{j \in N(i)} v_{i,j} \cdot \pi_{i,j}^* 
    + \sum_{j \in N(i)} (v_{i,j} \cdot (1 - \pi_{i,j}^*) + \pi_{i,j}^* \cdot \Spend(j)) 
    + \sum_{j \in \cD(i)} \Spend(j) \cdot \pi_{i,j}^*  \\
    &+ \sum_{j \in Q \setminus N(i) \setminus \cD(i)} \Spend(j) \cdot \pi_{i,j}^* 
    = \sum_{j \in N(i)} v_{i,j} + \sum_{j \in Q} \Spend(j) \cdot \pi_{i,j}^*. \nonumber
\end{align*}
Summing up for all $i \in A_{\bB1}$ gives that
$\OPT(A_{\bB1}) \le \LW(A_{\bB1}) + \sum_{i \in A_{\bB1}} \sum_{j \in Q} \Spend(j) \cdot \pi_{i,j}^*.$
\end{proof}
We are ready to finish the proof by putting Inequality~\ref{eq:small_v_A_B}, Lemma~\ref{lemma:FPA-small_v_A_B0} and Lemma~\ref{lemma:FPA-small_v_A_B1} together.
We have
\begin{align*}
    \OPT &= \OPT(A_B) + \OPT(A_{\bB0}) + \OPT(A_{\bB1}) \\
    &\le \LW(A_B) +  \LW(A_{\bB0}) + \sum_{i \in A_{\bB0}} \sum_{j \in \cD(i)} \pi_{i, j}^* \cdot \Spend(j)
    + \LW(A_{\bB1}) + \sum_{i \in A_{\bB1}} \sum_{j \in Q} \Spend(j) \cdot \pi_{i,j}^*\\
    &\le \LW(A) + \sum_{i \in A} \sum_{j \in Q} \pi_{i, j}^* \cdot \Spend(j)
    \le  \LW(A) + \sum_{j \in Q} \sum_{i \in A} \pi_{i, j}^* \cdot \Spend(j) \\
    &\le \LW(A) + \sum_{j \in Q} \Spend(j)
    \le 2 \LW(A),
\end{align*}
which concludes that the PoA is at most 2.
\end{proof}

\subsection{Uniform bidding}
It is known that, without budget constraints, when the bidders are assumed to bid uniformly, 
% We know that without budget constraints, when the bidders are restricted to bid uniformly,
the PoA is 1 for FPA \cite[Theorem~6.5]{deng2021towards}. In this section, we study uniform bidding when the bidders have a budget constraint.
We first show, somewhat surprisingly, that the I-PoA of FPA now becomes $n$.
% in this paper's setting, i.e. FPA with budget constraints. 

% \begin{theorem}
% \label{thm:fpa_budgets_ub_2}
% The price of anarchy of first price auction with budget is at least 2, even if the bidders are restricted to bid uniformly. 
% \end{theorem}
% \begin{proof}
% Consider the following example: there are two bidders and two queries: $B_1 = 100, v_{1,1} = 1-\varepsilon, v_{1,2} = 0$, $B_2 = 1 + \varepsilon, v_{2,1} = \frac{1}{\varepsilon}, v_{2,2} = 1$. The bids are: $b_{1,1} = 1-\varepsilon, b_{1,2} = 0, b_{2,1} = 1, b_{2,2} = \varepsilon$. Bidder 1 has a uniform multiplier of 1 and bidder 2 has a uniform multiplier of $\varepsilon$. The optimal liquid welfare is $2 - 
% \varepsilon$, and the equilibrium welfare is $1 + \varepsilon$.
% \end{proof}

\begin{theorem}
\label{thm:fpa_uniform_lb}
% Let $n \geq 1$.
% For every $\eps > 0$, there exists an instance with $n$ bidders such that if the bidders are restricted to bid uniformly in FPA, then there is an equilibrium with liquid welfare at most $\left(\frac{1}{n} + \eps \right) \IOPT$.
If the bidders are assumed to bid uniformly in FPA with both ROS and budget constraints, then the integral price of anarchy is at least $n$.
\end{theorem}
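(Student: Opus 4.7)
My plan is to exhibit a simple family of instances with $n$ bidders and $n$ queries, parameterized by a small $\delta > 0$, together with a uniform-bidding equilibrium whose liquid welfare is essentially $1$ while the optimal deterministic allocation has liquid welfare $n$. The exact ratio will approach $n$ as $\delta \to 0^+$, giving the claimed lower bound on the I-PoA.

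The construction takes $v_{i,j} = 1$ for every bidder $i$ and query $j$, and sets $B_1 = 1 + \delta$ while $B_i = 1$ for $i \ge 2$. Because all values are $1$, the uniform bidding constraint reduces to each bidder picking a single scalar $m_i$ and bidding $m_i$ on every query. I will propose the candidate equilibrium $m_1 = (1+\delta)/n$ and $m_i = 0$ for $i \ge 2$, under which bidder $1$ strictly outbids everyone on every query and wins all $n$ of them (so the argument is robust to any deterministic tie-breaking rule, which is the reason for introducing $\delta$). Bidder $1$'s spend is exactly $1+\delta = B_1$ and their value is $n$, so both the budget and ROS constraints hold with $\LW(1) = \min(B_1, n) = 1+\delta$ for $n \ge 2$; the other bidders contribute $0$ to the liquid welfare, so $\LW(\Eq) = 1 + \delta$.

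The crux is verifying that no bidder can profitably deviate under the uniform bidding restriction. For bidder $1$, they already win every query, so their value cannot strictly increase. For any bidder $i \ge 2$, winning any query requires a multiplier $m_i$ with $m_i > (1+\delta)/n$; but since all values are identical, using such a multiplier means bidder $i$ automatically outbids bidder $1$ on \emph{every} query and therefore wins all $n$ queries, incurring a total spend of $n m_i > 1 + \delta > 1 = B_i$, violating the budget constraint. This captures exactly the ``win nothing or break the budget'' phenomenon highlighted in the introduction: the rigidity of a single multiplier forces an all-or-nothing choice, and the only feasible ``all'' is ruled out by the slightly larger budget of bidder~$1$.

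Finally, the optimal deterministic allocation assigns query $i$ to bidder $i$, yielding $\LW = \sum_{i=1}^n \min(B_i, 1) = n$, so $\IOPT \ge n$. Combining with $\LW(\Eq) = 1 + \delta$ gives $\IOPT/\LW(\Eq) \ge n/(1+\delta)$, which tends to $n$ as $\delta \to 0^+$, proving that the I-PoA is at least $n$. The main obstacle I anticipated was ruling out profitable unilateral deviations for every deterministic tie-breaking rule; the $\delta$-perturbation of $B_1$ sidesteps it by making bidder $1$'s per-query bid strictly larger than $1/n$, so any competing bidder's feasible multiplier is bounded away from what is needed to win.
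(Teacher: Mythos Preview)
Your construction is appealingly simple, but the claim that it is ``robust to any deterministic tie-breaking rule'' is false, and the gap is precisely at the tie case you do not analyze. Suppose the deterministic tie-breaking rule gives query~$1$ to bidder~$2$ whenever bidders~$1$ and~$2$ submit the same bid on that query (and gives the remaining queries to bidder~$1$). Then bidder~$2$ can deviate to $m_2 = (1+\delta)/n$, tying bidder~$1$ on every query; under this rule bidder~$2$ wins exactly one query, gets value~$1$, and pays $(1+\delta)/n < 1 = B_2$ for $n\ge 2$. Both constraints are satisfied and the value strictly increases from~$0$, so your profile is \emph{not} an equilibrium under this rule. The flaw in your reasoning is the sentence ``winning any query requires a multiplier $m_i$ with $m_i > (1+\delta)/n$'': this is true only if ties always go to bidder~$1$. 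The $\delta$-perturbation does not prevent ties, because every bidder has the same value profile and can therefore match bidder~$1$'s bid exactly.

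The fix is easy: for a lower bound you may choose the tie-breaking rule, so simply stipulate that ties are broken in favor of bidder~$1$. With that stipulation your argument goes through, and is indeed simpler than the paper's. The paper instead makes the construction genuinely tie-breaking-robust by using heterogeneous values: bidder~$1$ has infinite budget and bids $1+\eps$ on query~$1$ and $2\eps$ on the others, while each bidder $i\ge 2$ has $B_i=1$, $v_{i,1}=1/\eps$, and $v_{i,i}=1$. The key feature is that any multiplier large enough for bidder~$i$ to win query~$i$ forces a bid strictly above $1+\eps$ on query~$1$, so bidder~$i$ unavoidably wins query~$1$ and pays more than $B_i$ \emph{regardless of how ties are broken}. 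That is what the asymmetry in values buys: it decouples ``can tie on the cheap queries'' from ``can afford the expensive query,'' which your symmetric instance cannot do.
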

% \chris{The proof below must be slightly modified for the ``any'' quantifier.}
% \wennan{I changed it to "there is an equilibrium". I think it's enough for PoA. Let me know if it looks ok now.}
% \chris{Okay, sounds good.}
%\chris{It would help to have a picture. Can I do this?}
\begin{proof}
We construct the instance as follows.
% Consider the following example:
There are $n$ bidders $\{1, \dots, n\}$ and $n$ queries $\{1, \dots, n\}$. Bidder $1$ has $B_1 = +\infty$, $v_{1,1} = b_{1,1} = 1 + \eps$, and for every $j \in \{2, \ldots, n\}$, $v_{1,j} = b_{1,j} = 2\eps$. Note that the uniform multiplier of bidder 1 is 1. Next, for every $i \in \{2, \ldots, n\}$, suppose bidder $i$ has $B_i = 1$, $v_{i,1} = \frac{1}{\eps}$, and $v_{i,i} = 1$. All other values are 0. We will discuss the bids of bidder $2, \ldots, n$ below.  

Note if bidder $1$ wins every query then they have no incentive to change their bid since they are paying their value on each query.
We now show that, if bidder $1$ does use a uniform multiplier of $1$ then in any equilibrium, bidder $i \in \{2, \ldots, n\}$ cannot win any query.
% \chris{I think I shortened the next paragraph. Just said: look, query $1$ gives an upper bound on the bid multiplier so they cannot win anything.}
% Now fix a bidder $i \in \{2, \ldots, n\}$.
% We claim that bidder $i$ does not win any queries.
Indeed, note that they cannot win query $1$ because they would need to beat bidder 1 and pay at least $1 + \eps$, violating their budget constraint.
Since their value for query $1$ is $1/\eps$, this means bidder $i$ must use a bid multiplier $m_i \leq \eps(1+\eps)$.
Next, recall that their value for query $i$ is $1$.
And thus, they bid at most $\eps(1+\eps) < 2\eps$ since $\eps < 1$ on query $i$.
Thus, bidder $i$ loses this query as well.
% Thus, to satisfy
% For bidder $i \in \{2, \ldots, n\}$, they cannot win query $1$ because they would need to beat bidder 1 and pay at least $1 + \eps$, which is more than their total budget. \chrisnote{Thus, their bid multiplier must be at most $\eps(1+\eps)$ so that bidder $i$ does not violate their budget constraint. }

% If bidder $i$ tries to win query $i$, they will have to bid at least $2\eps$ to beat bidder $1$, which makes their uniform multiplier at least $2\eps$, and their bid on query 1 at least $\frac{1}{\eps} \cdot 2\eps = 2$. Their bid on query 1 will be higher than bidder 1 so they will win query 1, but the payment will be $2$, which will break their budget constraint, thus they will never win query $i$ as well. Intuitively, they could not win any query if they set a small uniform multiplier, and they will break their budget constraint if they set a larger uniform multiplier. 

Based on the above analysis, the only feasible allocation in any equilibrium is that bidder $1$ wins all the queries, which has a total liquid welfare of $1 + \eps + (n-1) \cdot 2\eps$. The optimal (integral) allocation would be to allocate query $i$ to bidder $i$ for $i \in \{1, \ldots, n\}$. The optimal (integral) liquid welfare is $\IOPT = 1 + \eps + (n-1) = n + \eps$.
Thus, the liquid welfare at equilibrium is at most $\frac{1+\eps+2(n-1)\eps}{n+\eps} \cdot \IOPT \leq \left(\frac{1}{n} + 3\eps \right) \cdot \IOPT$.
Replacing $\eps$ with $\eps / 3$ in the argument proves the theorem.
\end{proof}

The following theorem gives a matching upper bound of the PoA of FPA with budget constraints.
\begin{theorem}
\label{thm:fpa_uniform_ub}
%If the bidders are restricted to bid uniformly in FPA, the liquid welfare of any equilibrium is at least $\frac{1}{n} \cdot \OPT$, where $n$ is the total number of bidders.
If the bidders are assumed to bid uniformly in FPA with both ROS and budget constraints, then the price of anarchy is at most $n$.
\end{theorem}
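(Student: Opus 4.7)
The plan is to adapt the proof of Theorem~\ref{thm:fpa_ub} to the uniform bidding setting. The target is the per-bidder inequality $\OPT(i) \le \LW(i) + \sum_{j \notin N(i)} \Spend(j)$ for every $i \in A$; summing it and using $\sum_j \Spend(j) \le \LW(A)$ then yields $\OPT \le \LW(A) + (n-1)\sum_j \Spend(j) \le n \cdot \LW(A)$. Partition the bidders exactly as in Theorem~\ref{thm:fpa_ub} into $A_B = \{i \in A : B_i \le \sum_{j \in N(i)} v_{i,j}\}$ and $A_{\bB} = A \setminus A_B$. For $i \in A_B$ the per-bidder bound is immediate since $\LW(i) = B_i \ge \OPT(i)$. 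For $i \in A_{\bB}$, write $V_i = \sum_{j \in N(i)} v_{i,j} = \LW(i)$; the easy subcase is when $v_{i,j} \le \Spend(j)$ for every $j \notin N(i)$, giving $\OPT(i) \le \sum_j \pi^*_{i,j} v_{i,j} \le V_i + \sum_{j \notin N(i)} \pi^*_{i,j} \Spend(j) \le \LW(i) + \sum_{j \notin N(i)} \Spend(j)$.

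The main obstacle is the complementary subcase, where some $j \notin N(i)$ has $v_{i,j} > \Spend(j)$. In Lemma~\ref{lemma:FPA-A_B1} bidder $i$ exploited the ability to place a targeted single-query bid, but under uniform bidding only the multiplier can be changed, and raising it to win one query typically pulls in several others as well. My plan to handle this is to deviate to the smallest multiplier that wins \emph{any} new positive-value query: let $r_1 = \min\{\Spend(j)/v_{i,j} : j \notin N(i),\, v_{i,j} > \Spend(j)\}$, which lies in $(0,1)$, and let $J_1 \subseteq Q \setminus N(i)$ be the set of queries attaining this minimum ratio. Observe $m_i \le r_1$, because otherwise bidder $i$'s bid would strictly exceed $\Spend(j)$ on some $j \in J_1$, contradicting $J_1 \cap N(i) = \emptyset$. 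Consider the deviation to multiplier $m' = r_1 + \eps$ for $\eps > 0$ small; because only finitely many ratios appear, for sufficiently small $\eps$ bidder $i$ wins exactly $N(i) \cup J_1$, their value strictly increases by $\sum_{j \in J_1} v_{i,j} > 0$, and $m' < 1$ keeps ROS satisfied. The equilibrium condition then forces a budget violation: $m'(V_i + \sum_{j \in J_1} v_{i,j}) > B_i$. Sending $\eps \to 0$, using $r_1 \le 1$, and substituting $\Spend(j) = r_1 v_{i,j}$ for $j \in J_1$ yields $\sum_{j \in J_1} \Spend(j) \ge B_i - V_i$, so $\OPT(i) \le B_i \le V_i + \sum_{j \notin N(i)} \Spend(j) = \LW(i) + \sum_{j \notin N(i)} \Spend(j)$.

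Putting the per-bidder inequalities together yields $\OPT \le \LW(A) + \sum_{i \in A_{\bB}} \sum_{j \notin N(i)} \Spend(j) \le \LW(A) + (n-1)\sum_{j \in Q} \Spend(j) \le n \cdot \LW(A)$, as desired. A few minor technicalities remain: one should verify that $\Spend(j_1) > 0$ in equilibrium (otherwise bidder $i$ could pick a tiny multiplier and win $j_1$ essentially for free, contradicting the equilibrium assumption), and one must handle tie-breaking when the deviating bid exactly matches the current winning bid on some query. Both are handled by short direct arguments and do not affect the structure above.
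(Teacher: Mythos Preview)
Your proof is correct and follows the same overall architecture as the paper's: partition into $A_B$ and $A_{\bB}$, prove the per-bidder inequality $\OPT(i)\le\LW(i)+\sum_{j\notin N(i)}\Spend(j)$, and sum. The difference is in how you handle the hard subcase in $A_{\bB}$. The paper further splits $A_{\bB}$ according to whether setting the multiplier to $1$ changes the allocation and violates the budget (its sets $A_{\bB0}$ and $A_{\bB1}$), proves a separate lemma (Lemma~\ref{lemma:AB1_raise_bid_violate_budget}) to show the relevant infimum $m_i'=\inf\{m\in(m_i,1]:\pi_i(m,m_{-i})\neq\pi_i(m_i,m_{-i}),\,c_i(m,m_{-i})>B_i\}$ is well-defined, and then deviates to $m_i'+\eps/2$, carrying an $\eps$-term through to the end. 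Your route is more direct: you deviate to just above the smallest ratio $r_1=\min_{j\notin N(i),\,v_{i,j}>\Spend(j)}\Spend(j)/v_{i,j}$, observe that value strictly increases and ROS is maintained (since $r_1<1$), and let the equilibrium hypothesis force the budget violation immediately. Because $r_1 v_{i,j}=\Spend(j)$ on $J_1$ by construction, the bound $B_i\le V_i+\sum_{j\in J_1}\Spend(j)$ falls out cleanly with no lingering $\eps$. This avoids both the auxiliary Lemma~\ref{lemma:AB1_raise_bid_violate_budget} and the final limiting argument the paper uses. The two technicalities you flag (positivity of $\Spend$ on $J_1$ and tie-breaking at the deviation multiplier) are genuine but minor, and are handled the same way in both proofs.
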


\begin{proof}
Suppose that each bidder $i$ has a uniform bid multiplier $m_i$, i.e.~for every $j \in Q$, $b_{i,j} = m_i \cdot v_{i,j}$. Given the bidders' multipliers $m_1, m_2, \dots, m_n$,
% let $\pi(m_1, m_2, \dots, m_n)$ denote the allocation outcome of the auction\todo{where is $\pi$ used? If only $\pi_i$ used, we can ditch this definition},
let $\pi_i(m_1, m_2, \dots, m_n) \in [0,1]^Q$ denote bidder $i$'s allocation and $c_i(m_1, m_2, \dots, m_n) \in \bR$ denote bidder $i$'s total spend across all queries. In an equilibrium, the spend must be feasible, i.e., $c_i(m_1, m_2, \dots, m_n) \le \min \{ B_i, \sum_{j \in N(i)} v_{i,j}\}$.
%\todo{Removed definition of $\pi(m_1, \ldots, m_n)$ since it was not used.}

Consider an equilibrium $\Eq$. Similar to the proof of Theorem~\ref{thm:fpa_IOPT_ub}, we split the bidders into two sets:
\begin{align*}
A_B = \{i \in A \text{ } | \text{ } B_i \le \sum_{j \in N(i)} v_{i,j}\} \text{ and }
A_{\bB} = A \setminus A_B.
\end{align*}
% \chris{Nit/optional: I wonder if there is a more descriptive name for the sets $A_B, A_{\bB}$, just so the reader does not have to remember what they mean. One suggestion is $A_{\mathsc{BC}}$ and $A_{\mathsc{NoBC}}$. We can say $A_{\mathsc{BC}}$ are the bidders whose contribution to the liquid welfare is constrained by their budget and $A_{\mathsc{NoBC}}$ are bidders whose contribution to the liquid welfare is \emph{not} constrained by their budget. What do you think? (Same thing for the other FPA proof and the rFPA proof.}
Observe that for $i \in A_B$, we have $\LW(i) = B_i \geq \Opt(i)$. Summing up for all $i \in A_B$, we have
\begin{equation}
\label{eq:uFPA-A_B}
    \LW(A_B) \ge \OPT(A_B).
\end{equation}
We further partition $A_{\bB}$ into two sets:
\begin{align*}
A_{\bB0} = \{i \in A_{\bB} \text{ } | \text{ } \pi_i(m_i, m_{-i}) = \pi_i(1, m_{-i}) \text{ and } c_i(1, m_{-i}) \le B_i\} \text{ and }
A_{\bB1} = A_{\bB} \setminus A_{\bB0}.
\end{align*}
In words, $A_{\bB0}$ is the set of bidders $i$ such that either (i) $m_i = 1$ or (ii) if bidder $i$ changes their bid multiplier $m_i$ to 1 then they win the same set of queries as in $\Eq$ and still satisfy their budget constraint. We will bound $\OPT(A_{\bB1})$ and $\OPT(A_{\bB0})$ in Lemma~\ref{lemma:uFPA-A_B0} and Lemma~\ref{lemma:uFPA-A_B1} below, and then put everything together using the constraint that the total spend in the auction is upper bounded by the total liquid welfare.

We first show the following claim when $m_i > 1$ for any bidder $i$.

\begin{claim}
\label{claim:large_mi}
% If the bidders are restricted to bid uniformly in FPA, and $m_i > 1$ in an equilibrium $\Eq$, then $i \in A_{\bB0}$. 
% \chrisnote{If the bidders are restricted to bid uniformly in FPA and $m_i > 1$ in an equilibrium $\Eq$ then bidder $i$ does not win any query. In particular, $i \in A_{\bB0}$.}
If the bidders are assumed to bid uniformly in FPA and $m_i > 1$ in an equilibrium $\Eq$ then bidder $i$ does not win any query. In particular, $i \in A_{\bB0}$.
In addition, $\Spend(j) \ge v_{i,j}$ for every $j \in Q$.
\end{claim}
\begin{proof}
% \chrisnote{The first assertion follows from the observation that if bidder $i$ did win any queries then their total cost would be more than their total value. Thus, bidder $i$ would violate their ROS constraint.}
The first assertion follows from the observation that if bidder $i$ did win any queries then their total cost would be more than their total value. Thus, bidder $i$ would violate their ROS constraint.
% If $m_i > 1$, then it must be that bidder $i$ does not win any query. This is because $i$ would break their ROS constraint \todo{or value constraint?} if they win any query $j$ since $b_{i,j} = m_i \cdot v_{i,j} > v_{i,j}$.
Since the allocation is monotone in a bidder's bid, bidder $i$ would still win no query when $m_i$ is lowered to 1, i.e., $\pi_i(m_i, m_{-i}) = \pi_i(m_i' = 1, m_{-i})$.
Hence, $i \in A_{\bB0}$.
%, so $i$ must be in $A_{\bB0}$.

For the last assertion, note that on each query $j$, the winner in $\Eq$ bids and pays at least $m_i \cdot v_{i,j} > v_{i,j}$.
%, because $i$ bids more than $v_{i,j}$ and loses.
\end{proof}

\begin{lemma}
\label{lemma:uFPA-A_B0}
We have $\LW(A_{\bB0}) + \sum_{i \in A_{\bB0}} \sum_{j \notin N(i)} \Spend(j) \ge \OPT(A_{\bB0}).$
\end{lemma}
\begin{proof} 
We first show that for all $i \in A_{\bB0}$ and $j \notin N(i)$, we have $\Spend(j) \geq v_{i,j}$ and thus,
\begin{equation}
    \label{eqn:spend_v_lb}
    \sum_{j \notin N(i)} \Spend(j) \geq \sum_{j \notin N(i)} v_{i,j}.
\end{equation}
To see this, fix $i \in A_{\bB0}$.
If $m_i > 1$ then Claim~\ref{claim:large_mi} shows that $\Spend(j) \geq v_{i,j}$.
Now suppose $m_i \le 1$.
By definition of $A_{\bB0}$, for every query $j \notin N(i)$, if bidder $i$ increases their bid to $v_{i,j}$, they would continue to lose that query.
This implies that the winner of $j \notin N(i)$ pays at least $v_{i,j}$.
% Remember $N(i)$ is the set of queries that $i$ wins in $\Eq$, and $O(i)$ is the set of queries that assigned to $i$ in $\OPT$.

% The case that $m_i > 1$ is covered by Claim~\ref{claim:large_mi}. The following inequality holds for all $i \in A_{\bB0}$:
% \begin{equation}
% \label{eq:uFPA-m1}
% \forall i \in A_{\bB0}, \forall j \notin N(i), \Spend(j) \ge v_{i, j}
% \end{equation}

Since $i \in A_{\bB0}$, we have that $\LW(i) = \sum_{j \in N(i)} v_{i,j}$.
Combining with Eq.~\eqref{eqn:spend_v_lb}, we get that
\[
    \LW(i) + \sum_{j \notin N(i)} \Spend(j) \ge \sum_{j \in N(i)} v_{i, j} + \sum_{j \notin N(i)} v_{i, j}
    =\sum_{j \in Q} v_{i, j}
    \ge \OPT(i).
\]
% \begin{align*}
%     \LW(i) + \sum_{j \notin N(i)} \Spend(j) &\ge \sum_{j \in N(i)} v_{i, j} + \sum_{j \notin N(i)} v_{i, j} \\
%     &=\sum_{j \in Q} v_{i, j} \\
%     &\ge \OPT(i)
% \end{align*}
The last inequality holds because $\sum_{j \in Q} v_{i, j}$ is an upper bound on bidder $i$'s liquid welfare.
Summing up the above inequality over all $i \in A_{\bB0}$, we conclude that
\[
    \LW(A_{\bB0}) + \sum_{i \in A_{\bB0}} \sum_{j \notin N(i)} \Spend(j) \ge \OPT(A_{\bB0}). \qedhere
\]
\end{proof}
% \chris{Future clean-up: I tend to like bite-sized claims. For example, here I think after you defined the set $A_{\bB0}$, you can write a claim such as $\LW(A_{\bB0}) + \sum_{i \in A_{\bB0}} \sum_{j \notin N(i)} \Spend(j) \geq \Opt(A_{\bB0})$. It also serves as a way to highlight the more important points of the proof and as a way to separate out parts of the proof. (Similarly for $A_{\bB0}$.)}
\begin{lemma}
    \label{lemma:AB1_raise_bid_violate_budget}
    For every bidder $i \in A_{\bB1}$, we have that $\pi_i(1, m_{-i}) \neq \pi_i(m_i, m_{-i})$ and $c_i(1, m_{-i}) > B_i$.
\end{lemma}
\begin{proof}
    If $i \in A_{\bB1}$ then there can be three possibilities.
    Either (1) $\pi_i(1, m_{-i}) = \pi_i(m_i, m_{-i})$ and $c_i(1, m_{-i}) > B_i$, (2) $\pi_i(1, m_{-i}) \neq \pi_i(m_i, m_{-i})$ and $c_i(1, m_{-i}) \leq B_i$, or (3) $\pi_i(1, m_{-i}) \neq \pi_i(m_i, m_{-i})$ and $c_i(1, m_{-i}) > B_i$.
    We show that the first two are impossible and so the last condition must hold.
    
    To see that (1) is not possible, note that since $i \notin A_B$ we have $B_i > \sum_{j \in N(i)} v_{i,j} = \sum_{j \in Q} \pi_{i,j}(m_i, m_{-i}) v_j$.
    Thus, if $\pi_i(1, m_{-i}) = \pi_i(m_i, m_{-i})$ then we would have $B_i > \sum_{j \in Q} \pi_{i,j}(1, m_{-i}) v_j = c_i(1, m_{-i})$.
    
    To see that (2) is not possible, note that Claim~\ref{claim:large_mi} shows that $m_i \leq 1$ and the fact that $i \in A_{\bB1}$ means that $m_i \neq 1$. Thus, $m_i < 1$.
    Therefore, if $\pi_i(1, m_{-i}) \neq \pi_i(m_i, m_{-i})$ and $c_i(1, m_{-i}) \leq B_i$ then the multipliers would not form an equilibrium (since the allocation is non-decreasing in bid).
\end{proof}
For $i \in A_{\bB1}$, define $m_i' = \inf\{ m \in (m_i, 1] \text{ } | \text{ } \pi_i(m, m_{-i}) \neq \pi_i(m_i, m_{-i}) \text{ and } c_i(m, m_{-i}) > B_i \}$.
Note that $m_i'$ is well-defined (and in $(m_i, 1]$) by Lemma~\ref{lemma:AB1_raise_bid_violate_budget}.
\begin{lemma}
\label{lemma:uFPA-A_B1}
Let $\eps > 0$ be such that $m_i + \eps/2 < m_i'$ for every $i \in A_{\bB1}$.
Then
\[
\LW(A_{\bB1}) + \sum_{i \in A_{\bB1}} \sum_{j \notin N(i)} \Spend(j) + \sum_{i \in A_{\bB1}} \sum_{j \notin N(i)} \eps \cdot v_{i,j} \geq \OPT(A_{\bB1}).
\]
\end{lemma}
\begin{proof} 
Consider a bidder $i \in A_{\bB1}$.
% By Claim~\ref{claim:large_mi}, for all $i \in A_{\bB1}$, it must be $m_i < 1$. \todo{Let's have a claim or something that summarizes that if $m_i > 1$ then they are winning no queries. --Done.} 
% \todo{This has to be strict right? Will be clearer if we just write $m_i < 1$ instead of using words. -- Done.}
% In an equilibrium, it must be that $i$ cannot raise $m_i$ to stay feasible while winning more queries. When $i$ raise $m_i$ to 1, it must be that $i$ would win more queries but also break their budget constraint.
% Let $m_i' \in (m_i, 1]$ be the smallest multiplier that makes $i$ wins more queries and breaks their budget. Denote the queries $i$ wins with $m_i'$ as $N'(i)$, we have:
Let $m_i'' = \min\{1, m_i' + \eps / 2\}$.
By the definition of $m_i'$ and Lemma~\ref{lemma:AB1_raise_bid_violate_budget}, if bidder $i$ uses a bid multiplier of $m_i''$ then bidder $i$ wins strictly more queries but must then violate their budget constraint.
We thus have that
\begin{align}
    B_i &< \sum_{j \in N'(i)} m_i'' \cdot v_{i,j}
    = \sum_{j \in N(i)} m_i'' \cdot v_{i,j} + \sum_{j \in N'(i) \setminus N(i)} m_i'' \cdot v_{i,j} \nonumber \\
    &\le \sum_{j \in N(i)} v_{i,j} + \sum_{j \in N'(i) \setminus N(i)} (m_i'+\eps / 2) \cdot v_{i,j}
    = \LW(i) + \sum_{j \in N'(i) \setminus N(i)} (m_i'+\eps / 2) \cdot v_{i,j}. \label{eq:uFPA-A22-Budget}
\end{align}

% \chris{Can we write: choose $\eps > 0$ sufficiently small such that $m_i + \eps < m_i'$ for all $i \in A_{\bB1}$?}
% \wennan{Done.}
% Choose $\eps > 0$ sufficiently small such that $m_i + \eps < m_i'$ for all $i \in A_{\bB1}$.
Since $m_i + \eps / 2 < m_i'$, it must be that bidder $i$ still wins $N(i)$ when setting their multiplier to $m_i' - \eps$.
% Because $m_i'$ is the smallest multiplier that makes $i$ wins more queries, it must be that $i$ still wins $N(i)$ when setting their multiplier to $m_i'-\eps$.
In other words, the spend of query $j \in N'(i) \setminus N(i)$ must be at least $(m_i'-\eps/2) v_{i, j}$. Combining with Eq.~\eqref{eq:uFPA-A22-Budget}, we have $B_i < \LW(i) + \sum_{j \in N'(i) \setminus N(i)} (\Spend(j) + \eps \cdot v_{i,j})$.
Therefore, using that $\OPT(i) \leq B_i$, we have $\OPT(i) \le \LW(i) + \sum_{j \notin N(i)} \Spend(j) + \sum_{j \notin N(i)} \eps \cdot v_{i,j}$.
Summing this inequality for all $i \in A_{\bB1}$, we get that
\[
    \OPT(A_{\bB1}) \le \LW(A_{\bB1}) + \sum_{i \in A_{\bB1}} \sum_{j \notin N(i)} \Spend(j) + \sum_{i \in A_{\bB1}} \sum_{j \notin N(i)} \eps \cdot v_{i,j}. \qedhere
\]
\end{proof} 

Let $N^{-1}(j)$ denote the advertiser that wins query $j$ and let $V = \max_{i \in A} \sum_{j \in Q} v_{i,j}$.
Combining Lemma~\ref{lemma:uFPA-A_B1} and Lemma~\ref{lemma:uFPA-A_B0}, we get that
\begin{align}
    \OPT(A_{\bB})
    & = \OPT(A_{\bB0}) + \OPT(A_{\bB1}) \notag \\
    & \leq \LW(A_{\bB0}) + \sum_{i \in A_{\bB0}} \sum_{j \notin N(i)} \Spend(j) \notag 
    + \LW(A_{\bB1}) + \sum_{i \in A_{\bB1}} \sum_{j \notin N(i)} \Spend(j) + \sum_{i \in A_{\bB1}} \sum_{j \notin N(i)} \eps \cdot v_{i,j} \notag \\
    & \leq \LW(A_{\bB}) + \sum_{i \in A_{\bB}} \sum_{j \notin N(i)} \Spend(j) + \eps n V \notag
    \leq \LW(A_{\bB}) + \sum_{i \in A} \sum_{j \notin N(i)} \Spend(j) + \eps n V \notag \\
    & \leq \LW(A_{\bB}) + \sum_{j \in Q} \sum_{i \in A \setminus N^{-1}(j)} \Spend(j) + \eps n V \label{eq:swap_order_sum_ufpa} \\
    & \leq \LW(A_{\bB}) + (n-1) \cdot \sum_{j \in Q} \Spend(j) + \eps n V \label{eq:ufpa_trivial_sum} \\
    & \leq \LW(A_{\bB}) + (n-1) \cdot \LW(A) + \eps n V. \label{eq:ufpa_upperbound_lw}
\end{align}
In Eq.~\eqref{eq:swap_order_sum_ufpa}, we swapped the order of the sum.
In Eq.~\eqref{eq:ufpa_trivial_sum}, we used that $|A \setminus N^{-1}(j)| = n-1$ (each query is assigned to exactly one bidder).
In Eq.~\eqref{eq:ufpa_upperbound_lw}, we used that the liquid welfare is an upper bound on the total spend.
Finally, combining with Eq.~\eqref{eq:uFPA-A_B}, we conclude that
\[
    \OPT(A) = \OPT(A_B) + \OPT(A_{\bB})
    \leq \LW(A_{B}) + \LW(A_{\bB}) + (n-1) \cdot \LW(A) + \eps n V
    = n\cdot\LW(A) + \eps n V.
\]
Since the above inequality is true for every $\eps > 0$, we conclude that $\OPT(A) \leq n \cdot \LW(A)$.
\end{proof}

\section{Randomized First-Price Auction}
In this section, we study the efficiency of the randomized first-price auction \cite[\S 5]{liaw2023efficiency}.
The auction is defined as follows.
Given a parameter $\alpha \geq 1$ and the two highest bids $b_1 \geq b_2$:
\begin{itemize}
    \item If $b_1 \geq \alpha b_2$ then bidder $1$ wins with probability $1$.
    \item Otherwise, bidder $1$ wins with probability $\frac{1}{2} \left( 1 + \log_{\alpha}(b_1 / b_2) \right)$. With the remaining probability, bidder $2$ wins.
\end{itemize}
The winner of the auction pays their bid.
We use $\rFPA(\alpha)$ to denote the randomized first-price auction with parameter $\alpha$.
In this section, we focus on the setting where there are two bidders in each query.

We prove two results for rFPA when advertisers have a budget constraint.
The first is that rFPA with an appropriate $\alpha$ parameter achieves at least $0.555$ of the optimal liquid welfare;
this is identical to the efficiency of rFPA when advertisers only have a ROS constraint \cite{liaw2023efficiency}.
Second, we consider the setting where advertisers are assumed to bid uniformly.
Recall that by Theorem~\ref{thm:fpa_uniform_ub} with two bidders, if the bidders have both a budget and an ROS constraint, FPA with uniform bidding guarantees at least $1/2$ of the optimal liquid welfare and this is tight.
This is in contrast to the setting where bidders only have an ROS constraint, in which case FPA is optimal when bidders utilize a uniform bidding strategy.
We show that when bidders have both a budget and an ROS constraint, rFPA achieves almost $0.7$ of the optimal liquid welfare when bidders utilize a uniform bidding strategy.

Let $\pi_{i, j}$ denote the probability that bidder $i$ wins query $j$ in an equilibrium,
$\pi_{i, j}^*$ be the probability that bidder $i$ wins query $j$ for some $\pi^* \in \argmax_{\pi \in \Pi} \{ \LW(\pi) \}$.
% in an assignment achieves the optimal liquid welfare.
Let $\tilde{v}_{i,j} = \pi_{i, j} \cdot v_{i, j}$ be the expected value bidder $i$ obtains from query $j$ in the equilibrium and $v_{i, j}^* = \pi_{i, j}^* \cdot v_{i, j}$ be the expected value that bidder $i$ obtains from query $j$ in $\OPT$. For a fixed equilibrium $\Eq$, let $\Veq_i = \sum_{j\in Q} \pi_{i,j} v_{i, j}$ be the value that bidder $i$ obtains in equilibrium.
% We only consider the case that there are only two bidders in this section.
For a bidder $i$, let $\oi$ denote the other bidder.

% \begin{lemma}
% \label{lemma-V_budget}
% Suppose that advertiser $1$ wins query $j$ with probability $1$.
% If the bids are undominated then $b_{1,j} \geq \alpha \cdot \min\left\{ v_{2,j}, B_2 - \Veq_2 \right\}$.
% \end{lemma}
% \begin{proof}
%     We assume that $B_2 > \Veq_2$ (otherwise the lemma is trivial).
%     We will prove the contrapositive statement, so suppose that advertiser $1$ wins query $j$ with probability $1$ and $b_{1,j} < \alpha \cdot \min\{ v_{2,j}, B_2 - \Veq_2\}$.
%     Since advertiser $1$ wins with probability $1$, it must be that $b_{2,j} < b_{1,j} / \alpha < \min\{v_{2,j}, B_2 - \Veq_2\}$.
%     We now show that advertiser $2$ can gain strictly more value while maintaining both their ROS and budget constraints by bidding $b_{2,j}' = \min\{v_{2,j}, B_2 - \Veq_2\}$.
    
%     Indeed, since $b_{2,j}' > b_{1,j} / \alpha$, this would mean that advertiser $2$ wins with some probability $p \in (0, 1]$.
%     In this case, advertiser $2$'s value would increase by $p\cdot v_{2,j} > 0$.
%     Moreover, they pay $p \cdot \min\{v_{2,j}, B_2 - \Veq_2\}$ which is less than both the value gained and their remaining budget.
% \end{proof}

% I made it \oi for "other i".
% \newcommand{\oi}{\bar{i}}

% \chris{Right now, it uses the word undominated. Shall we change to equilibrium and just remark that it also works for undominated?}
\begin{lemma}
\label{lemma-V_budget}
Suppose that bidder $i$ wins query $j$ with probability $1$. 
If $\Veq_{\oi} < B_{\oi}$ and the bids are undominated then $b_{i,j} \geq \alpha v_{\oi,j}$.
\end{lemma}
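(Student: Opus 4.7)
The plan is to reduce the lemma to showing $b_{\oi,j} \ge v_{\oi,j}$, after which the $\rFPA$ tie-break rule closes the argument. Since bidder $i$ wins query $j$ with probability $1$, the allocation rule of $\rFPA$ forces $b_{i,j} \ge \alpha \, b_{\oi,j}$ (otherwise bidder $\oi$ would be the unique highest bidder or would win with probability $\tfrac{1}{2}(1 - \log_{\alpha}(b_{i,j}/b_{\oi,j})) > 0$). Combined with $b_{\oi,j} \ge v_{\oi,j}$ this yields $b_{i,j} \ge \alpha \, b_{\oi,j} \ge \alpha \, v_{\oi,j}$.

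To establish $b_{\oi,j} \ge v_{\oi,j}$, I would argue by contradiction from the undominated-bid hypothesis. Assume $b_{\oi,j} < v_{\oi,j}$. If $b_{i,j}/\alpha \ge v_{\oi,j}$ already then $b_{i,j} \ge \alpha v_{\oi,j}$ follows immediately from step one, so we may assume $b_{\oi,j} \le b_{i,j}/\alpha < v_{\oi,j}$. Now consider the deviation where bidder $\oi$ changes only the bid on query $j$ to $b' = b_{i,j}/\alpha + \epsilon$ for small $\epsilon > 0$, chosen so that $b' < v_{\oi,j}$ still holds. Under this deviation, the $\rFPA$ rule gives bidder $\oi$ a strictly positive winning probability $\pi'$ on query $j$, which tends to $0$ continuously as $\epsilon \to 0$. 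The extra expected cost on query $j$ is $\pi' b' < \pi' v_{\oi,j}$, so the ROS constraint is preserved (the increase in spend is strictly less than the increase in value). For the budget constraint, the hypothesis $\Veq_{\oi} < B_{\oi}$ provides strictly positive slack $B_{\oi} - \Veq_{\oi}$; since $\pi' b' \to 0$, for sufficiently small $\epsilon$ the extra expected spend fits inside this slack. Hence the deviation is feasible and strictly increases bidder $\oi$'s expected value on query $j$ from $0$ to $\pi' v_{\oi,j} > 0$, strictly dominating the original bid.

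This contradicts the undominated-bid assumption, so we must have $b_{\oi,j} \ge v_{\oi,j}$, and the lemma follows. The main obstacle is verifying that the deviation is simultaneously ROS-feasible and budget-feasible: ROS is handled by choosing $b' < v_{\oi,j}$, while budget-feasibility is precisely where the hypothesis $\Veq_{\oi} < B_{\oi}$ is used, together with the continuity of $\rFPA$'s winning probability in the bid that lets us shrink $\pi'$ freely.
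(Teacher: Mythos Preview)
Your argument is correct and is essentially the paper's own proof: both assume $b_{i,j} < \alpha v_{\oi,j}$ and exhibit a feasible deviation for bidder $\oi$ on query $j$ that strictly increases value, with the budget slack coming from $V^{\Eq}_{\oi} < B_{\oi}$ and the ROS slack from bidding below $v_{\oi,j}$. The only cosmetic difference is that the paper case-splits on whether bidding exactly $v_{\oi,j}$ already fits the budget, whereas you go straight to a small perturbation above $b_{i,j}/\alpha$; note, though, that your stated ``reduction to $b_{\oi,j} \ge v_{\oi,j}$'' is not what you actually prove (and need not hold in general), since in the case $b_{i,j}/\alpha \ge v_{\oi,j}$ you shortcut directly to the conclusion rather than establishing that inequality.
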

\begin{proof}
    We will prove the contrapositive statement.
    In other words, we will show that if $\Veq_{\oi} < B_{\oi}$ and $b_{i,j} < \alpha v_{\oi, j}$ then the bids are not undominated.
    To that end, suppose that bidder $i$ wins query $j$ with probability $1$ and $b_{i,j} < \alpha v_{\oi,j}$.
    Since bidder $i$ wins with probability $1$, it must be that $b_{\oi,j} \le b_{i,j} / \alpha$. 
% \chris{Maybe change $C_{OST}$ to $\Cost$? We can also change the macro later this way.}
   Let $\Cost(b) = \frac{b}{2}(1 + \log_\alpha\frac{b}{b_{i,j}})$ be the cost function of bidder $\oi$'s bid $b = b_{\oi,j}$.
   We consider two cases based on whether $\Cost(v_{\oi,j}) \leq B - \Veq_{\oi}$ or $\Cost(v_{\oi,j}) > B - \Veq_{\oi}$.
   
   \paragraph{Case 1: $\Cost(v_{\oi,j}) \le B - \Veq_{\oi}$.} Bidder $\oi$ could bid $v_{\oi,j}$ and have a non-zero probability of winning while maintaining both their ROS and budget constraints.  Indeed, since $b_{\oi, j} = v_{\oi,j} > b_{i, j} / \alpha$, this would mean that bidder $\oi$ wins with some probability $p \in (0, 1]$. In this case, bidder $\oi$'s value would increase by $p \cdot v_{\oi,j} > 0$. Moreover, they pay $\Cost(v_{\oi,j}) = p \cdot v_{\oi,j}$ which is less than both the value gained and their remaining budget. 
   
   \paragraph{Case 2: $\Cost(v_{\oi,j}) > B - \Veq_{\oi}$.} Observe that $\Cost(b)$ is a non-decreasing continuous function of $b$ with $\Cost(b_{i,j} / \alpha) = 0$ and $\Cost(v_{\oi,j}) > 0$. Hence, there exists  $\hat{b} \in (b_{i,j} / \alpha, v_{\oi,j})$ such that $0 < \Cost(\hat{b}) \le B - \Veq_{\oi}$. Similar to Case 1, if bidder $\oi$ bids $\hat{b}$ then bidder $\oi$ wins with some probability $p \in (0, 1]$.
   Their value would increase by $p \cdot v_{\oi,j} > 0$ and they would pay $\Cost(\hat{b}) = p \cdot \hat{b} < p \cdot v_{\oi,j}$.
   In particular, bidder $\oi$ would continue to satisfy both their ROS and budget constraints.
   
   To conclude, in both Case 1 and Case 2, bidder $\oi$ can increase their value while maintaining both their ROS and budget constraints, which contradicts the fact that the bids are undominated. 
\end{proof}

We restate the following lemma in \cite{liaw2023efficiency}:
\begin{lemma}[{\cite[Lemma~5.5]{liaw2023efficiency}}]
\label{lemma:undominated_bid_lb}
Fix bidder $i$ and query $j$.
For any set of undominated bids, if $\pi_{i,j} \in (0, 1)$ and $V_i < B_i$ then $b_{i,j} \geq \frac{v_{i,j}}{1 + \ln(\alpha) + \ln(\beta_i)}$ where $\beta_{i,j} = b_{i,j} / b_{\oi, j}$ is the ratio between bidder $i$'s bid and the other bidder's bid.
\end{lemma}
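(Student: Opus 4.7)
The plan is to argue by contradiction: suppose the bids are undominated, $\pi_{i,j}\in(0,1)$, $V_i<B_i$, and $b_{i,j} < \frac{v_{i,j}}{1+\ln(\alpha)+\ln(\beta_{i,j})}$, and exhibit an infinitesimal upward perturbation of $b_{i,j}$ that strictly increases bidder $i$'s value while preserving both the ROS and the budget constraints, contradicting undominance.

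Because $\pi_{i,j}\in(0,1)$, the bid ratio $\beta_{i,j}=b_{i,j}/b_{\oi,j}$ lies strictly inside $(1/\alpha,\alpha)$, which is the region where the $\rFPA(\alpha)$ allocation rule is smooth: $\pi_{i,j}=\tfrac{1}{2}(1+\log_\alpha\beta_{i,j})$, so with $b_{\oi,j}$ held fixed one has $d\pi_{i,j}/db_{i,j}=1/(2b_{i,j}\ln\alpha)$. Using the identity $2\pi_{i,j}\ln(\alpha)=\ln(\alpha)+\ln(\beta_{i,j})$, the hypothesized upper bound on $b_{i,j}$ rearranges to the marginal inequality
\[
\frac{v_{i,j}}{2b_{i,j}\ln(\alpha)} \;>\; \pi_{i,j}+\frac{1}{2\ln(\alpha)},
\]
i.e.\ the instantaneous rate of value gained by raising $b_{i,j}$ strictly exceeds the instantaneous rate of additional expected spend.

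Now consider perturbing bidder $i$'s bid on query $j$ to $b_{i,j}+\varepsilon$ for a small $\varepsilon>0$, keeping all other bids fixed. For $\varepsilon$ sufficiently small the new ratio stays inside $(1/\alpha,\alpha)$, so the change in bidder $i$'s total expected value is $\frac{v_{i,j}}{2b_{i,j}\ln(\alpha)}\varepsilon+O(\varepsilon^2)$ and the change in their expected spend is $\bigl(\pi_{i,j}+\frac{1}{2\ln(\alpha)}\bigr)\varepsilon+O(\varepsilon^2)$. The displayed marginal inequality yields three conclusions for all sufficiently small $\varepsilon>0$: bidder $i$'s value strictly grows; the ROS slack only widens, since the value change strictly exceeds the spend change; and the budget constraint remains satisfied because $V_i<B_i$ combined with the ROS inequality at the current profile (total spend $\le V_i$) yields positive budget slack $B_i-V_i$, which dominates the $O(\varepsilon)$ spend increase. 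This contradicts undominance, and hence $b_{i,j}\ge v_{i,j}/(1+\ln(\alpha)+\ln(\beta_{i,j}))$.

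The only real subtlety is tracking which hypothesis protects which constraint: the ROS slack after the perturbation is produced by the marginal inequality, whereas the budget slack is produced by combining $V_i<B_i$ with the ROS inequality at the current bid profile. Everything else is an elementary calculus computation, and the main step I expect to double-check carefully is the rewriting of the claimed inequality into the marginal form via $2\pi_{i,j}\ln(\alpha)=\ln(\alpha)+\ln(\beta_{i,j})$.
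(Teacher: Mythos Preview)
The paper does not give its own proof of this lemma; it merely restates it as a citation of \cite[Lemma~5.5]{liaw2023efficiency}. Your argument is correct and is precisely the first-order-condition approach one would expect: the derivative computations $d\pi_{i,j}/db_{i,j}=1/(2b_{i,j}\ln\alpha)$ and $d(\pi_{i,j}b_{i,j})/db_{i,j}=\pi_{i,j}+1/(2\ln\alpha)$ are right, and the identity $2\pi_{i,j}\ln\alpha=\ln\alpha+\ln\beta_{i,j}$ cleanly converts the claimed bound into the marginal inequality you need. You also correctly isolate the one new ingredient relative to the unbudgeted result in \cite{liaw2023efficiency}: the hypothesis $V_i<B_i$, together with feasibility of the current profile (spend $\le V_i$), supplies strictly positive budget slack that absorbs the $O(\varepsilon)$ spend increase.
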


\begin{theorem}
\label{thm:rfpa_ub}
For any set of undominated bids for two bidders with both ROS and budget constraints, $\rFPA(\alpha = 1.4)$ obtains at least $\frac{1}{1.8}$-fraction of the welfare of the optimal (randomized) allocation.
\end{theorem}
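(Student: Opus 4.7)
The plan is to reduce the budget-constrained analysis to the ROS-only argument of \cite{liaw2023efficiency} by first isolating bidders whose equilibrium value already saturates their budget. Fix an arbitrary profile of undominated bids for the two bidders and let $\pi$ denote the resulting rFPA allocation. Partition the bidders into
\[
    A_B = \{i \in A : \Veq_i \geq B_i\} \quad\text{and}\quad A_{\bB} = \{i \in A : \Veq_i < B_i\}.
\]
For $i \in A_B$ we immediately have $\LW(i) = B_i \geq \Opt(i)$, so the only real work is bounding $\Opt(A_{\bB})$ by $1.8 \cdot \LW$. If $A_{\bB} = \emptyset$ we are done; otherwise we proceed with a per-query analysis.

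Fix a query $j$. We want to bound $\pis_{1,j} v_{1,j} + \pis_{2,j} v_{2,j}$ by a combination of $\pi_{1,j} v_{1,j} + \pi_{2,j} v_{2,j}$ and the equilibrium spend $\Spend(j)$ on $j$. If bidder $i$ deterministically wins $j$ (i.e. $\pi_{i,j} = 1$) and the opponent $\oi$ lies in $A_{\bB}$, then Lemma~\ref{lemma-V_budget} gives $b_{i,j} \geq \alpha v_{\oi,j}$, which bounds $v_{\oi,j}$ by $\Spend(j)/\alpha$. If $\pi_{i,j} \in (0,1)$ and bidder $i$ lies in $A_{\bB}$, Lemma~\ref{lemma:undominated_bid_lb} gives the logarithmic bound $b_{i,j} \geq v_{i,j}/(1 + \ln \alpha + \ln \beta_{i,j})$. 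These are exactly the bid lower bounds used in the ROS-only proof of \cite[\S 5]{liaw2023efficiency}, so the per-query charging argument and the subsequent one-variable numerical optimization at $\alpha = 1.4$ that yields the constant $1/1.8$ can be taken over without modification.

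Summing the per-query inequality over $j \in Q$, and using that $\sum_j \Spend(j) = \sum_i \Spend(i) \leq \sum_i \min\{B_i, \Veq_i\} = \LW$ (by the budget and ROS constraints), produces $\Opt(A_{\bB}) \leq 1.8 \cdot \LW$. Combining this with the trivial inequality $\Opt(A_B) \leq \LW(A_B)$ from the first paragraph yields $\Opt \leq 1.8 \cdot \LW$, which is the claimed bound.

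The main obstacle is ensuring that every invocation of a bid-lower-bound in the ROS-only proof can still be applied in our budgeted setting, where the lemmas now require the additional hypothesis $\Veq_i < B_i$. This is exactly what the case analysis above handles: whenever the per-query argument needs a bid lower bound on bidder $i$, either $i$ (in the tie-zone case) or its opponent (in the deterministic case) is required to lie in $A_{\bB}$, and Lemmas~\ref{lemma-V_budget} and \ref{lemma:undominated_bid_lb} fire exactly under those conditions. In the remaining case, where both bidders lie in $A_B$, the contribution of $\Opt$ is already dominated by $\LW$ without any bid-lower-bound at all. Thus the budget-constrained setting introduces no new tight case beyond the ROS-only one, and the ratio $1/1.8$ is preserved.
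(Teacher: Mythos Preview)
Your overall strategy matches the paper's: split off $A_B$, run the ROS-only per-query charging on $A_{\bB}$ via Lemmas~\ref{lemma-V_budget} and~\ref{lemma:undominated_bid_lb}, and recombine. The verification that the bid-lower-bound lemmas fire exactly when the bidder in question lies in $A_{\bB}$ is correct.

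There is, however, a genuine gap in your final recombination. From $\Opt(A_{\bB}) \leq 1.8\,\LW$ and $\Opt(A_B) \leq \LW(A_B)$ you \emph{cannot} deduce $\Opt \leq 1.8\,\LW$: adding them yields only $\Opt \leq 1.8\,\LW + \LW(A_B)$, which overshoots whenever $A_B \neq \emptyset$. The issue is that the spend bound $\sum_j \Spend(j) \leq \LW$ already consumes all of $\LW$, including the contribution of bidders in $A_B$, so adding $\LW(A_B)$ back afterwards double-counts it.

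The paper fixes this by keeping the finer two-term structure of the charging argument rather than collapsing it to a single inequality. What the per-query analysis actually delivers is
\[
    \eta\,\LW + \gamma\,\LW(A_{\bB}) \;\geq\; \tfrac{1}{1.8}\,\Opt(A_{\bB})
\]
for the specific weights $\eta = 0.44$, $\gamma = 0.56$ (so $\eta + \gamma = 1$) coming from the numerical optimization in \cite{liaw2023efficiency}. Now write $\LW = \bigl(\eta\,\LW + \gamma\,\LW(A_{\bB})\bigr) + \gamma\,\LW(A_B)$; the first bracket covers $\tfrac{1}{1.8}\,\Opt(A_{\bB})$, and since $\gamma = 0.56 \geq 1/1.8$, the leftover $\gamma\,\LW(A_B) \geq \gamma\,\Opt(A_B) \geq \tfrac{1}{1.8}\,\Opt(A_B)$. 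Summing gives $\LW \geq \tfrac{1}{1.8}\,\Opt$. In short, do not coarsen the charging output to ``$\Opt(A_{\bB}) \leq 1.8\,\LW$''; keep the $(\eta,\gamma)$ split so that the unused portion $\gamma\,\LW(A_B)$ is available to absorb $\Opt(A_B)$.
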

\begin{proof}
We follow the proof of Theorem 5.3 in \cite{liaw2023efficiency}. The setting in this paper is different from \cite{liaw2023efficiency} in that there is a budget constraint in addition to the ROS constraint. This makes the analysis of bidders' spend different. 

In addition, including the budget constraint means that the optimal assignment may be randomized while the setting in \cite{liaw2023efficiency} always has an optimal assignment which is deterministic. Consider the following example with two bidders and one query: $B_1 = B_2 = 1, v_{11} = v_{21} = 2$. The optimal assignment is to assign the query to each bidder with a $0.5$ probability, which achieves a total expected liquid welfare of $2$, but any deterministic assignment only has a total liquid welfare of $1$.

Fix any equilibrium $\Eq$ with two bidders, we use $V_i$ instead of $V_i^{\Eq}$ to denote the value bidder $i$ obtains in the equilibrium. We split all the advertisers into two sets:
\begin{align*}
A_B = \{i \text{ } | \text{ } B_i \le V_i\} \text{ and }
A_{\bB} = A \setminus A_B.
\end{align*}
For $i \in A_B$, $\LW(i) = B_i$, which is the maximum liquid welfare bidder $i$ can obtain. Thus, we have:
\begin{equation}
\label{eq:rFPA-A1}
    \LW(A_B) \ge \Opt(A_B).
\end{equation}
Fix $i \in A_{\bB}$.
We now consider three cases depending on $\pi_{i,j}$ relative to $\pis_{i,j}$.

\paragraph{Case 1: $\pis_{i,j} \leq \pi_{i,j}$.}
In this case, we have $v_{i,j} \cdot \pis_{i,j} \leq v_{i,j} \cdot \pi_{i,j}$.
We let $Q_{i, 1} = \setst{j \in Q}{\pis_{i,j} \leq \pi_{i,j}}$.

\paragraph{Case 2: $\pis_{i,j} > 0$ and $\pi_{i,j} = 0$.}
By Lemma~\ref{lemma-V_budget}, $\Spend(j) = b_{\oi,j} \geq \alpha v_{i,j} \geq \alpha \cdot \pis_{i,j} \cdot v_{i,j}$.
We let $Q_{i, 2} = \setst{j \in Q}{\pis_{i,j} > 0, \pi_{i,j} = 0}$.

\paragraph{Case 3: $0 < \pi_{i,j} < \pis_{i,j} < 1$.}
Fix a query $j$ and let $\beta_j = b_{i,j} / b_{\oi, j} \in (1/\alpha, \alpha)$.
Define $m_{\beta} = \frac{1}{2} \ln\left( 1 + \frac{\ln \beta}{\ln \alpha} \right)$.
We have that $\pi_{i,j} = m_{\beta}$.
Next, by Lemma~\ref{lemma:undominated_bid_lb}, we have $b_{i,j} \geq \frac{v_{i,j}}{1 + \ln(\alpha) + \ln(\beta_j)}$.
In particular, we have
\begin{align*}
\Spend(j) &= m_{\beta} \cdot b_{i,j} + (1 - m_{\beta}) \cdot b_{\bar{i},j} \\
&\ge m_{\beta} \cdot \frac{v_{i,j}}{1 + \ln{\alpha} + \ln{\beta}} + (1 - m_{\beta}) \cdot \frac{v_{i,j}}{\beta \cdot (1 + \ln{\alpha} + \ln{\beta})}\\
& = v_{i,j} \cdot s_{\beta}
\geq \pis_{i,j} \cdot v_{i,j} \cdot s_{\beta}.
\end{align*}
where $s_{\beta} = \frac{1 + \frac{\ln{\beta}}{\ln{\alpha}}}{2(1+\ln{\alpha}+\ln{\beta})} + \frac{1 - \frac{\ln{\beta}}{\ln{\alpha}}}{2\beta(1+\ln{\alpha}+\ln{\beta})}$.
For a fixed $\beta$, we define the set $Q_{i, 3}^{\beta} = \setst{j \in Q}{b_{i,j} / b_{\oi, j} = \beta}$.

Since the number of queries $Q$ is finite,
there exists a finite list $\beta_1, \ldots, \beta_k$ such that 
\[
    Q = \bigcup_{i \in [2]} \left( Q_{i,1} \cup Q_{i,2} \cup (\cup_{\ell=1}^k Q_{i, 3}^{\beta_{\ell}}) \right).
\]

We now lower bound the liquid welfare in several ways.
Later, we use these lower bounds to show that some linear combination of them give an upper bound on the optimal liquid welfare which yields our desired approximation ratio.

First, we note that the total liquid welfare is an upper bound on the total spend.
Thus, we have
\begin{align}
    \LW
    & \geq \sum_{j \in Q} \Spend(j)
    \geq \sum_{i \in A_{\bB}} \sum_{j \in Q_{i, 2}} \Spend(j)
      + \sum_{i \in A_{\bB}} \sum_{\ell=1}^k \sum_{j \in Q_{i, 3}^{\beta_{\ell}}} \Spend(j) \label{eq:spend_lb_disjoint} \\
    & \geq \sum_{i \in A_{\bB}} \sum_{j \in Q_{i,2}} \alpha \cdot \pis_{i,j} \cdot v_{i,j} + \sum_{i \in A_{\bB}} \sum_{\ell=1}^k \sum_{j \in Q_{i,3}^{\beta_{\ell}}} s_{\beta_{\ell}} \cdot \pis_{i,j} \cdot v_{i,j} \label{eq:spend_lb_terms} \\
    & = \sum_{i \in A_{\bB}} \sum_{j \in Q_{i,2}} \alpha \cdot \vs_{i,j} + \sum_{i \in A_{\bB}} \sum_{\ell=1}^k \sum_{j \in Q_{i,3}^{\beta_{\ell}}} s_{\beta_{\ell}} \cdot \vs_{i,j}. \label{eq:lw_rewrite}
\end{align}
In Eq.~\eqref{eq:spend_lb_disjoint}, we made use of the fact that the sets
$\{Q_{i,2}\}_{i \in A_{\bB}} \cup \{Q_{i,3}^{\beta_\ell}\}_{i \in A_{\bB}, \ell \in [k]}$ are all pairwise disjoint (Claim~\ref{claim:disjoint_sets})
and thus, the RHS is a valid lower bound on the total spend.
For Eq.~\eqref{eq:spend_lb_terms}, we used the spend lower bounds from case 2 and case 3 described above. %\todo{Add equation numbers.}

\begin{claim}
\label{claim:disjoint_sets}
The sets $\{Q_{i,2}\}_{i \in A_{\bB}}, \{Q_{i,3}^{\beta_\ell}\}_{i \in A_{\bB}, \ell \in [k]}$ are all pairwise disjoint.
\end{claim}
\begin{proof}
For fixed $i \in A_{\bB}$, the sets $\{Q_{i,2}\},  \{Q_{i,3}^{\beta_\ell}\}_{\ell \in [k]}$ are pairwise disjoint since $Q_{i,2}$ correspond to queries $j$ where $b_{i,j} \leq b_{\oi, j} / \alpha$ and $Q_{i,3}^{\beta}$ correspond to queries $j$ where $b_{i,j} = \beta b_{\oi, j}$.
If $|A_{\bB}| \leq 1$ then the claim is proved.
Otherwise, we assume $A_{\bB} = \{1, 2\}$.
Let $Q_{i} = Q_{i,2} \cup \left( \cup_{\ell=1}^k Q_{i,3}^{\beta_\ell} \right)$.
Note that if $j \in Q_i$ then $\pi_{i,j} < \pis_{i,j}$ and so it must be that $\pi_{\oi, j} > \pis_{\oi,j}$.
Hence $j \notin Q_{\oi}$.
The claim is proved.
\end{proof}

% Also, in Case 1 and 3, $v_{i,j}^*$ is upper bounded by terms of $\tilde{v}_{i,j}$, sum up all queries for $i \in A_{\bB}$:
Note that for $j \in Q_{i, 1}$, we have $\pi_{i,j} v_{i,j} \geq \pis_{i,j} v_{i,j}$ (by definition of $Q_{1,i}$) and for $j \in Q_{i,3}^{\beta}$, we have $\pi_{i,j} v_{i,j} = m_{\beta} v_{i,j} \ge m_{\beta} v_{i,j}^*$. %\todo{Ref $m_{\beta}$.}
Thus, we have
\begin{align}
    \LW(A_{\bB}) &\ge \sum_{i \in A_{\bB}} \sum_{j \in Q_{i,1}} \tilde{v}_{i,j} + \sum_{i \in A_{\bB}} \sum_{\ell=1}^k \sum_{j \in Q_{i,3}^{\beta_\ell}} \tilde{v}_{i,j}\nonumber\\
    &\ge \sum_{i \in A_{\bB}} \sum_{j \in Q_{i,1}} \vs_{i,j} + \sum_{i \in A_{\bB}} \sum_{\ell=1}^k \sum_{j \in Q_{i,3}^{\beta_\ell}} m_{\beta_\ell} \cdot \vs_{i,j}.
    \label{eq:rfpa-lw}
\end{align}
Let $\OPT(A_{\bB})$ be the total optimal liquid welfare for all bidders in $A_{\bB}$.
In other words,
\begin{equation}
\label{eqn:opt_A2}
    \Opt(A_{\bB}) = \sum_{i \in A_{\bB}} \sum_{j \in Q_{i,1} \cup Q_{i,2} \cup (\cup_{\ell=1}^k Q_{i,3}^{\beta_\ell})} \vs_{i, j}
\end{equation}
Combining Eq.~\eqref{eq:lw_rewrite}, Eq.~\eqref{eq:rfpa-lw}, and Eq.~\eqref{eqn:opt_A2}, if $\gamma, \eta \geq 0$, we have that
\begin{align*}
    \eta \LW + \gamma \LW(A_{\bB})
    & \geq \sum_{i \in A_{\bB}} \sum_{j \in Q_{i,1}} \gamma \vs_{i,j} + \sum_{i \in A_{\bB}} \sum_{j \in Q_{i,2}} \alpha \eta \vs_{i,j} + \sum_{i \in A_{\bB}} \sum_{\ell=1}^k \sum_{j \in Q_{i,3}^{\beta_{\ell}}} (\eta m_{\beta_\ell} + \gamma s_{\beta_\ell}) \vs_{i,j} \\
    & \geq \min\left\{ \gamma, \alpha \eta, \min_{\beta \in [1/\alpha, \alpha]}\eta m_{\beta} + \gamma s_{\beta} \right\} \cdot \sum_{i \in A_{\bB}} \sum_{j \in Q} \vs_{i,j} \\
    & = \min\left\{ \gamma, \alpha \eta, \min_{\beta \in [1/\alpha, \alpha]}\eta m_{\beta} + \gamma s_{\beta} \right\} \cdot \OPT(A_{\bB}).
\end{align*}
\cite{liaw2023efficiency} show that for $\alpha = 1.4$, if $\eta = 0.44$ and $\gamma = 0.56$ then $\min\left\{ \gamma, \alpha \eta, \min_{\beta \in [1/\alpha, \alpha]}\eta m_{\beta} + \gamma s_{\beta} \right\} \geq 1/1.8$.
Combining with Eq.~\eqref{eq:rFPA-A1}, we conclude that
\begin{equation}
    \LW = (\eta + \gamma) \cdot \LW = (\eta \LW + \gamma \LW(A_{\bB})) + \gamma \LW(A_B) \geq \frac{1}{1.8} \cdot \OPT(A_{\bB}) + \gamma \OPT(A_B) \geq \frac{1}{1.8} \OPT,
\end{equation}
since $\gamma = 0.56 \geq \frac{1}{1.8}$.
\end{proof}

\subsection{Uniform bidding}
Theorem~\ref{thm:fpa_uniform_lb} shows that uniform bidding increases the I-PoA for FPA. Interestingly, restricting bidders to bid uniformly actually lowers PoA for rFPA. Intuitively, the reason that I-PoA is large for uniform bidding in FPA is because a bidder's bid to all queries are highly correlated. It is possible to construct instances where, using uniform bidding, a bidder can either win no queries or win everything and violate their budget constraint but it is not possible for the bidder to smoothly interpolate these extremes. However, in randomized auctions, bidders are able to smoothly increase or decrease their bids to win fractional queries, making it possible to avoid the bad cases in deterministic auctions.

\begin{theorem}
\label{thm:rfpa_uniform_ub}
If the bidders are assumed to bid uniformly in $\rFPA(\alpha = 7.62)$, then the PoA is at most 1.5.
\end{theorem}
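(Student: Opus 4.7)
The plan is to follow the general strategy of Theorem~\ref{thm:rfpa_ub}, adapted to the uniform bidding setting. I split the bidders into $A_B = \{i : B_i \leq \Veq_i\}$ and $A_{\bB} = A \setminus A_B$; as before, $\LW(A_B) \geq \OPT(A_B)$ follows immediately from the definition of liquid welfare, so it suffices to bound $\OPT(A_{\bB})$ by a linear combination of $\LW$ and $\LW(A_{\bB})$ whose coefficients sum to $1.5$.

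For $i \in A_{\bB}$, I partition queries by comparing $\pi_{i,j}$ against $\pis_{i,j}$ into the same three cases as in Theorem~\ref{thm:rfpa_ub}: (1) $\pis_{i,j} \leq \pi_{i,j}$, for which $\tilde v_{i,j} \geq \vs_{i,j}$ is immediate; (2) $\pis_{i,j} > 0 = \pi_{i,j}$, for which a spend lower bound is needed; and (3) $0 < \pi_{i,j} < \pis_{i,j}$, where I need both a spend bound and a value bound on the equilibrium allocation. Crucially, the identity $\pi_{i,j} = m_\beta$ with $\beta_j = (m_i / m_{\bar{i}})(v_{i,j} / v_{\bar{i},j})$ persists because $\rFPA$'s allocation depends only on the ratio of bids, so uniform bidding merely adds a single multiplicative factor $m_i/m_{\bar{i}}$ common to all $\beta_j$.

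The technical heart is replacing Lemma~\ref{lemma-V_budget} and Lemma~\ref{lemma:undominated_bid_lb} with uniform-bidding analogues. The key observation is that, when a non-budget-binding bidder $i$ contemplates raising $m_i$ infinitesimally, the marginal value comes from all currently-fractional queries simultaneously. Hence the equilibrium condition under uniform bidding becomes aggregated: the total marginal spend from raising $m_i$ must equal or exceed the total marginal value bidder~$i$ would gain, for otherwise raising $m_i$ slightly would strictly improve bidder~$i$ without violating either constraint. Since $d\Spend(j)/dm_i$ on a fractional query is a known explicit function of $\beta_j$, this aggregated identity, together with a uniform-bidding analogue of Lemma~\ref{lemma-V_budget} handling the case-(2) queries (bidder $\bar{i}$ can rescale their multiplier until they start winning query $j$ with some probability), yields per-query bounds of the form $\Spend(j) \geq s_\beta \cdot \vs_{i,j}$ with an $s_\beta$ that is strictly larger than the one in Theorem~\ref{thm:rfpa_ub} for appropriate $\alpha$.

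Finally, I form a linear combination $\eta \LW + \gamma \LW(A_{\bB})$ and choose $\eta, \gamma, \alpha$ so that each of the coefficients $\gamma$, $\alpha \eta$, and $\min_{\beta \in [1/\alpha, \alpha]} (\eta m_\beta + \gamma s_\beta)$ is at least $(\eta + \gamma)/1.5$. The value $\alpha = 7.62$ arises from balancing the minimum over $\beta$ in case~(3) against the $\alpha \eta$ constraint from case~(2); the larger $\alpha$ relative to the $1.4$ of Theorem~\ref{thm:rfpa_ub} reflects that uniform bidding tightens the case-(3) spend bound enough to afford a wider fractional regime. The main obstacle is rigorously extracting a \emph{per-query} spend lower bound from a single \emph{aggregated} equilibrium inequality, since uniform bidding couples deviations across all queries through a single multiplier; this is the new technical ingredient required and is exactly what permits the improvement of the PoA from $1.8$ to $1.5$.
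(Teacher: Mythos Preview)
Your overall framework---the $A_B/A_{\bB}$ split, the three-case partition of queries for each $i \in A_{\bB}$, and the final linear combination $\eta\,\LW + \gamma\,\LW(A_{\bB})$---matches the paper exactly. The divergence is entirely in how you treat Case~3, and the ``main obstacle'' you flag (extracting per-query spend bounds from an aggregated first-order condition on the single multiplier) is in fact a non-issue, because the paper takes a much more direct route.

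The paper's replacement for Lemma~\ref{lemma:undominated_bid_lb} is simply this: if $i \in A_{\bB}$ and some query has $\pi_{i,j} \in (0,1)$, then the uniform multiplier satisfies $m_i = 1$ exactly. The argument is two lines. If $m_i > 1$ and bidder $i$ wins anything, the ROS constraint is violated. If $m_i < 1$, then since under $\rFPA$ both total value $V(m)$ and total cost $\Cost(m)$ are continuous and nondecreasing in $m$, and since $\Cost(m_i) \leq V(m_i) = \Veq_i < B_i$ with $V$ strictly increasing at $m_i$ (there is a fractional query), some $m' \in (m_i,1)$ has $V(m') > V(m_i)$ and $\Cost(m') < B_i$, contradicting equilibrium. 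Hence $b_{i,j} = v_{i,j}$ on \emph{every} query---a per-query identity obtained with no aggregation at all.

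With $b_{i,j} = v_{i,j}$ in hand, the Case~3 spend bound becomes $\Spend(j) \ge v_{i,j}\, s_\beta$ with the sharper $s_\beta = m_\beta + (1 - m_\beta)/\beta$, replacing the $s_\beta$ of Theorem~\ref{thm:rfpa_ub} that carried the extra $1/(1+\ln\alpha+\ln\beta)$ factor. Case~2 also follows immediately from $m_i=1$: if $\pi_{i,j}=0$ then $b_{\bar i,j} \ge \alpha b_{i,j} = \alpha v_{i,j}$, so no separate uniform-bidding analogue of Lemma~\ref{lemma-V_budget} is needed either. Re-running the optimization with the new $s_\beta$ gives the stated $\alpha \approx 7.62$ and $\eta \approx 0.33$. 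In short, your structural plan is correct, but the marginal-aggregation machinery you propose is unnecessary; the entire improvement over Theorem~\ref{thm:rfpa_ub} comes from the one-line observation that uniform bidding forces $m_i = 1$.
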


\begin{proof}
The proof is similar to Theorem~\ref{thm:rfpa_ub}. Case 1 and Case 2 are the same as in Theorem~\ref{thm:rfpa_ub}. For Case 3, we have the following lemma instead of Lemma~\ref{lemma:undominated_bid_lb}:

\begin{lemma}
\label{lemma:uniform_undominated_bid_lb}
Suppose the bidders are assumed to bid uniformly. Fix bidder $i$ and query $j$.
For any set of undominated bids, if $\pi_{i,j} \in (0, 1)$ and $V_i < B_i$ then $b_{i,j} = v_{i,j}$.
\end{lemma}
\begin{proof}
We denote bidder $i$'s uniform bid multiplier as $m_i$, i.e.~for every query $j \in Q$, $b_{i,j} = m_i \cdot v_{i,j}$. Let $\pi$ be the allocation in equilibrium $\Eq$. It is easy to see that $m_i \le 1$ when $V_i > 0$. Assume the opposite that $m_i > 1$, then we have $b_{i, j} > v_{i, j}$. If there exists $\pi_{i, j} > 0$, then $\sum_{j \in Q} \pi_{i, j}\cdot b_{i, j} > \sum_{j \in Q} \pi_{i, j} \cdot v_{i, j}$, i.e., the ROS constraint is violated.

Next we show that $m_i \ge 1$. Assume the opposite that $m_i < 1$. Let $\Cost(m)$ be the cost function of bidder $i$'s uniform multiplier and $V(m)$ the the value function of bidder $i$'s uniform multiplier. Note that the ROS constraint is always satisfied as long as $m_i \le 1$. So the only reason why $m_i < 1$ would be that if bidder $i$ deviates to $m = 1$, they may violate their budget constraint i.e.~$B_i < \Cost(1) = V(1)$. In the equilibrium $\Eq$, we have $m_i < 1$, and $\Cost(m_i) < V(m_i) < B_i$ where the last inequality is by the hypothesis of this lemma. Because $\Cost(m)$ and $V(m)$ are both monotone non-decreasing functions, there must exist $m' \in (m_i, 1)$ such that $V(m') > V(m_i)$ and $\Cost(m') < B_i$. This contradicts the fact that bidder $i$ is in an equilibrium. We conclude that $m_i = 1$.
\end{proof}
% Thus, Case 3 in this setting:
\paragraph{Case 3: $0 < \pi_{i,j} < \pis_{i,j} < 1$.}
Fix a query $j$ and let $\beta_j = b_{i,j} / b_{\oi, j} \in (1/\alpha, \alpha)$.
Define
\[
    m_{\beta} = \frac{1}{2} \ln\left( 1 + \frac{\ln \beta}{\ln \alpha} \right).
\]
We have that $\pi_{i,j} = m_{\beta}$.
Next, by Lemma~\ref{lemma:uniform_undominated_bid_lb}, we have $b_{i,j} \geq v_{i,j}$.
In particular, we have
\begin{align*}
\Spend(j) &= m_{\beta} \cdot b_{i,j} + (1 - m_{\beta}) \cdot b_{\bar{i},j} \\
&\ge m_{\beta} \cdot v_{i,j} + (1 - m_{\beta}) \cdot \frac{v_{i,j}}{\beta}\\
& = v_{i,j} \cdot s_{\beta} \\
& \geq \pis_{i,j} \cdot v_{i,j} \cdot s_{\beta},
\end{align*}
where $s_{\beta} = m_{\beta} + \frac{1 - m_{\beta}}{\beta}$.

Proceeding as in the proof of Theorem~\ref{thm:rfpa_uniform_ub}, we conclude that PoA $\le 1.5$ when $\alpha=7.63$ and $\eta = 0.33$.
\end{proof}

\section{Quasi-Proportional First Price Auction}
\label{sec:proportional}
In this section, we consider the following quasi-proportional power mechanism. Given a parameter $\alpha \geq 1$ and the bids $b_1, \ldots, b_n$, we allocate to bidder $i$ with probability $\frac{b_i^{\alpha}}{\|b\|_{\alpha}^{\alpha}}$, where $\|b\|_{\alpha} = (\sum_{i \in A} b_i^\alpha)^{1/\alpha}$.
The winner of the auction is charged their bid.
We note that a similar mechanism was also studied in \cite{mirrokni2010quasi} although they considered the mechanism with $\alpha \leq 1$.

% \begin{algorithm}
% \label{alg:pfpa}
% \caption{Proportional First Price Auction (proprotional FPA)}
% Given a parameter $\alpha \geq 1$ and the bids $b_1, \ldots, b_n$, 
% we choose bidder $i$ as the winner with probability $\frac{b_i^{\alpha}}{\|b\|_{\alpha}^{\alpha}}$ and charge them $b_i$.
% \end{algorithm}

\begin{theorem}
\label{thm:pfpa_ub}
The price of anarchy for the quasi-proportional FPA mechanism with both ROS and budget constraints is at most $2$ when $\alpha \to \infty$.
\end{theorem}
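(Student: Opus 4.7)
The plan is to mimic the proof of Theorem~\ref{thm:fpa_IOPT_ub} (and Theorem~\ref{thm:fpa_small_v_ub}) while leveraging the smoothing property of the quasi-proportional mechanism in the limit $\alpha \to \infty$. As before, I split the bidders into $A_B = \{i \in A : B_i \le \sum_{j \in N(i)} v_{i,j}\}$ and $A_{\bB} = A \setminus A_B$. For $i \in A_B$ we have $\LW(i) = B_i \ge \OPT(i)$, hence $\LW(A_B) \ge \OPT(A_B)$.

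The main new ingredient is a characterization of what any single bidder $i \in A_{\bB}$ can reach by deviation. Let $M_j^{(i)} = (\sum_{k \neq i} b_{k,j}^{\alpha})^{1/\alpha}$ be the soft-max of the competing bids on query $j$. A direct calculation shows that bidder $i$ can reach any target allocation $\pi^*_{i,j} \in (0,1)$ on query $j$ by bidding $b_{i,j}^{\mathrm{new}} = M_j^{(i)} \cdot (\pi^*_{i,j} / (1-\pi^*_{i,j}))^{1/\alpha}$, paying $\pi^*_{i,j} \cdot b_{i,j}^{\mathrm{new}}$ in expectation; as $\alpha \to \infty$, $b_{i,j}^{\mathrm{new}} \to \max_{k \neq i} b_{k,j}$ and the expected spend converges to $\pi^*_{i,j} \cdot \max_{k \neq i} b_{k,j}$. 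Simultaneously, in the current equilibrium the allocation concentrates on the top bidder as $\alpha \to \infty$, so $\Spend(j) \to \max_{k \in A} b_{k,j} \ge \max_{k \neq i} b_{k,j}$.

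Now fix $i \in A_{\bB}$ and consider the deviation in which bidder $i$ simultaneously targets $\pi^*_{i,j}$ on every query. The equilibrium assumption must fail for this deviation, which means one of: (a) the new value $\sum_j \pi^*_{i,j} v_{i,j}$ is at most $V_i = \LW(i)$, so $\OPT(i) \le \sum_j \pi^*_{i,j} v_{i,j} \le \LW(i)$; (b) the ROS constraint is violated, giving $\sum_j \pi^*_{i,j} \cdot \max_{k \neq i} b_{k,j} > \sum_j \pi^*_{i,j} v_{i,j} \ge \OPT(i)$; or (c) the budget is violated, giving $\sum_j \pi^*_{i,j} \cdot \max_{k \neq i} b_{k,j} > B_i \ge \OPT(i)$. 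In every case $\OPT(i) \le \LW(i) + \sum_j \pi^*_{i,j} \Spend(j) + o(1)$. Summing over $i \in A_{\bB}$, swapping the order of summation, and using $\sum_{i \in A} \pi^*_{i,j} \le 1$ together with $\sum_{j \in Q} \Spend(j) \le \LW(A)$, yields $\OPT(A_{\bB}) \le \LW(A_{\bB}) + \LW(A) + o(1)$. Combined with the $A_B$ bound, this gives $\OPT \le 2\LW(A) + o(1)$, proving the theorem in the limit.

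The main obstacle is making the limit argument fully rigorous. For each finite $\alpha$, the relation among $b_{i,j}^{\mathrm{new}}$, $\pi_{i,j}$, and the expected spend only approximates the limit statement above, and the equilibrium condition only yields $o(1)$-perturbed versions of (a), (b), (c). Care is needed to control these error terms uniformly, especially when some target $\pi^*_{i,j} = 1$ (which requires infinite bid for exact attainability) and must be approximated by $1-\delta$ before taking $\alpha \to \infty$. A secondary subtlety is verifying $\max_{k \neq i} b_{k,j} \le \Spend(j)$ in the limit uniformly over $i$, which follows from the winner-takes-all behavior of the quasi-proportional mechanism as $\alpha \to \infty$; care is also needed when the top bid has a tie.
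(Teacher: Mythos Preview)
Your proposal is correct and takes a genuinely different route from the paper. The paper works query by query: it uses a \emph{local first-order condition}---bidder $i$ cannot gain by raising her bid on a single query $j$ infinitesimally---to show that whenever $\pi_{i,j}\le\eta$ one already has $\Spend(j)\ge v_{i,j}\cdot\frac{\alpha(1-\eta)}{(n\eta)^{1/\alpha}(\alpha-\alpha\eta+1)}$; this gives a per-query inequality which, summed with weights proportional to $\pis_{i,j}$, yields $\eta\,\OPT\le\bigl(1+(n\eta)^{1/\alpha}\tfrac{\alpha-\alpha\eta+1}{\alpha(1-\eta)}\bigr)\LW$, and one sends $\alpha\to\infty$ then $\eta\to 1$. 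Your argument instead invokes the equilibrium condition once, globally: you let bidder $i$ deviate to the bids achieving $\tilde\pi_{i,j}=\min(\pis_{i,j},1-\delta)$ on all queries simultaneously, use only the norm inequality $M_j^{(i)}\le\|b_j\|_\alpha\le n^{1/\alpha}\Spend(j)$ to bound the deviation cost by $(n/\delta)^{1/\alpha}\sum_j\pis_{i,j}\Spend(j)$, and then apply the trichotomy (value drops / ROS fails / budget fails) to get $(1-\delta)\,\OPT(i)\le\LW(i)+(n/\delta)^{1/\alpha}\sum_j\pis_{i,j}\Spend(j)$. Both routes give a finite-$\alpha$ bound of the same shape, converging to $2$ by first sending $\alpha\to\infty$ and then the auxiliary parameter to its limit. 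Your approach is more elementary (no calculus; it parallels the deviation arguments used for FPA in the paper), while the paper's approach produces a reusable standalone spend lower bound per query. The uniformity concern you flag is resolved precisely by the finite-$\alpha$ inequality $b_{i,j}^{\mathrm{new}}\le (n/\delta)^{1/\alpha}\Spend(j)$; note that, as in the paper's bound, the $n^{1/\alpha}$ factor means the limit is taken for fixed $n$, and ties pose no problem once you work with $\|b_j\|_\alpha$ rather than the literal maximum.
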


\begin{proof}
First, we require the following lemma.
\begin{lemma}
    \label{lemma:prop_spend_lb}
    Fix a query $j$ and suppose that $0 < \prob{\text{$i$ wins query $j$}} \leq \eta$.
    Then
    \[
        \Spend(j) \geq v_{i,j} \cdot \frac{\alpha\cdot(1 - \eta)}{(n \eta)^{1/\alpha} (\alpha - \alpha\eta + 1)}.
    \]
\end{lemma}
\begin{proof}
    Since the query $j$ is fixed, we drop the subscript $j$.
    We also set $i = 1$.
    First observe that the condition $\prob{\text{$i$ wins query $j$}} \leq \eta$ is equivalent to $\|b\|_{\alpha}^{\alpha} \geq b_1^{\alpha} / \eta$.
    Next, we prove a lower bound on $b_1$.
    Let $f(b) = \frac{(v-b) \cdot b^{\alpha}}{b^{\alpha} + \|b_{-1}\|_{\alpha}^{\alpha}}$ be the difference between the value that bidder $1$ extracts on the query and their expected payment, if they bid $b$.
    Note that we must have that $f'(b_1) \leq 0$ otherwise bidder $1$ can raise their bid by an infinitesimal amount to get more value while maintaining their ROS and budget constraint.
    Taking derivatives, we thus have that
    \[
        f'(b) =  -\frac{b^{\alpha - 1} \cdot \left(b^{\alpha + 1} + b \|b_{-1}\|_{\alpha}^{\alpha} \cdot (\alpha + 1) - \|b_{-1}\|_{\alpha}^{\alpha} v  \alpha\right)}{\|b\|_{\alpha}^{2\alpha}}.
    \]
    We thus require that
    \begin{align*}
        0
        \leq b^{\alpha + 1} + b \|b_{-1}\|_{\alpha}^{\alpha} \cdot (\alpha + 1) - \|b_{-1}\|_{\alpha}^{\alpha} v  \alpha
        \leq b \|b_{-1}\|_{\alpha}^{\alpha} \cdot \left( \frac{1}{1/\eta - 1} + \alpha + 1 \right) - \|b_{-1}\|_{\alpha}^{\alpha} v \alpha,
    \end{align*}
    whence, $b \geq v\alpha \cdot \frac{1/\eta - 1}{\alpha / \eta - \alpha + 1/\eta}$.
    Now, we lower bound the spend.
    We have that
    \[
        \Spend(j) = \sum_{i=1}^n \frac{b_i^{\alpha + 1}}{\|b\|_{\alpha}^{\alpha}}
        \geq \frac{\|b\|_{\alpha}}{n^{1/\alpha}}
        \geq \frac{b_1}{(n \eta)^{1/\alpha}}
        \geq v \cdot \frac{\alpha/\eta - \alpha}{(n \eta)^{1/\alpha} (\alpha / \eta - \alpha + 1/\eta)}.
    \]
    Multiplying the numerator and denominator by $\eta$ gives the claim.
\end{proof}

\begin{lemma}
    \label{lemma:prop_prob_spend_lb}
    For every $i \in A_{\bB}$, $j \in Q$, $\eta \in (0,1)$, and $y_{i,j} \in [0, 1]$,
    we have
    \[
        \pi_{i,j} \cdot v_{i,j}
        + y_{i,j} \cdot \frac{(n\eta)^{1/\alpha} \eta (\alpha - \alpha \eta + 1)}{\alpha(1-\eta)} \cdot \Spend(j) \geq \eta \cdot y_{i,j} v_{i,j}.
    \]
\end{lemma}
\begin{proof}
    If $\pi_{i,j} \geq \eta$ then $\pi_{i,j} \cdot v_{i,j} \geq \eta \cdot y_{i,j} v_{i,j}$.
    Otherwise, if $\pi_{i,j} < \eta$ then we can apply Lemma~\ref{lemma:prop_spend_lb}.
\end{proof}

We now apply Lemma~\ref{lemma:prop_prob_spend_lb} with $y_{i,j} = \frac{\pis_{i,j}}{\sum_{i' \in A_{\bB}} \pis_{i',j}}$.
For a fixed bidder $i$, we can sum over all $j$ to get that
\begin{align*}
    \eta \cdot \OPT(i)
    & = \eta \sum_{j \in Q} \pis_{i,j} v_{i,j} 
    \leq \sum_{j \in Q} \eta \cdot \frac{\pis_{i,j}}{\sum_{i' \in A_{\bB}} \pis_{i',j}} v_{i,j} \\
    & \leq \sum_{j \in Q} \pi_{i,j} v_{i,j} + \sum_{j \in Q} \frac{\pis_{i,j}}{\sum_{i' \in A_{\bB}} \pis_{i',j}} \frac{(n\eta)^{1/\alpha} \eta (\alpha - \alpha \eta + 1)}{\alpha(1-\eta)} \cdot \Spend(j) \\
    & \leq \LW(i) + \frac{(n\eta)^{1/\alpha} \eta (\alpha - \alpha \eta + 1)}{\alpha(1-\eta)} \cdot \sum_{j \in Q} \frac{\pis_{i,j}}{\sum_{i' \in A_{\bB}} \pis_{i',j}} \Spend(j).
\end{align*}
Thus, summing both sides over $i \in A_{\bB}$, we have that
\begin{align*}
    \eta \cdot \OPT(A_{\bB})
    \leq \LW(A_{\bB})
    + \frac{(n\eta)^{1/\alpha} \eta (\alpha - \alpha \eta + 1)}{\alpha(1-\eta)} \cdot \sum_{j \in Q} \Spend(j)
    \leq \LW(A_{\bB})
    + \frac{(n\eta)^{1/\alpha} \eta (\alpha - \alpha \eta + 1)}{\alpha(1-\eta)} \cdot \LW.
\end{align*}
Adding $\LW(A_B) = \OPT(A_B)$ to both sides gives that $\eta \cdot \OPT \leq \left( 1 + \frac{(n\eta)^{1/\alpha} \eta (\alpha - \alpha \eta + 1)}{\alpha(1-\eta)} \right) \cdot \LW$.
Thus, the PoA is at most $\frac{1}{\eta} + \frac{(n\eta)^{1/\alpha} (\alpha - \alpha \eta + 1)}{\alpha(1-\eta)}$.
% \chris{At this point, can see that if we take limits as $\alpha \to \infty$, we get that above converges to $1/\eta + 1$. But $\eta$ was basically arbitrary so PoA is $2$.}
If we take limits as $\alpha \to \infty$, we get that this converges to $1/\eta + 1$. Since $\eta$ is arbitrary, we conclude that the PoA is $2$.
\end{proof}
\section{Discussion}
% Table~\ref{tab:fpa-PoA} shows that the gap of the PoA for FPA between the ROS with budget setting and the ROS only setting is large. One may wonder what happens in the setting only with budget constraint. The following example shows that PoA for FPA is unbounded for value maximizers only with a budget constraint. Suppose there are $2$ bidders and a single query. Bidder $1$ has $B_1 = 1$ and $v_{1,1}=1$. Bidder $2$ has $B_2 = 2$ and $v_{2,1}=\eps$. Both bidders bid their budgets. Bidder $2$ wins the query and pays $2$. This is an equilibrium because no one could deviate their strategy to win more value while satisfying their constraint. The total liquid welfare is $\eps$ and the optimal liquid welfare is $1$ by allocating the query to bidder $1$ with probability $1$. $\eps \to 0$ shows that the PoA is unbounded. 
 In this paper, we study the PoA and I-POA of auto-bidders with both budget and ROS constraints in non-truthful auctions, which complement previous works on the PoA of different auctions for auto-bidders with only ROS constraints. The setting with budget constraints is different and challenging for several reasons. First, Theorem~\ref{thm:IOPT-ub} shows that there is a large gap between deterministic and randomized allocations in our setting. In contrast, recall that the best deterministic allocation is also the best randomized allocation in the setting with only a ROS constraint. Thus, the techniques required for the efficiency analyses are different from previous related work and may be extended for future work with budget constraints. Second, when the bidders are assumed to bid uniformly, our results are surprisingly inconsistent with setting with only a ROS constraint. Deng et al.~\cite{deng2021towards} shows that the PoA is 1 when bidders are assumed to bid uniformly in FPA with only ROS constraints. However, we show that the I-PoA for this setting with budget constraints becomes $n$, which is worse than the I-PoA of 2 for non-uniform bidding. We also show that uniform bidding does improve the PoA for randomized FPA in Theorem~\ref{thm:rfpa_uniform_ub}. Intuitively, the reason that I-PoA is large for uniform bidding in FPA is because a bidder's bid to all queries are highly correlated, and it is possible that a bidder could not win any query because when they will have to take none or too many queries and violate their budget constraint. However, in randomized auctions, bidders are able to smoothly increase or decrease their bids to win fractional queries to avoid such bad cases in deterministic auctions.
 
There are many interesting future directions to explore beyond this work. We analyze the PoA and I-PoA of different non-truthful auctions for the cases that a Nash equilibrium exists. One question we leave for future work the existence of Nash equilibria. For the budget constraints in randomized auctions, we study the ex ante version. It will be interesting to also consider the ex post budget constraint setting and compare it with our setting.

\section*{ACKNOWLEDGMENTS}
We thank Kshipra Bhawalkar for early discussions of this work. We thank Song Zuo and Jieming Mao for discussions on randomized tie-breaking.

\bibliographystyle{splncs04}
\bibliography{references}

% \appendix
% \input{appendix.tex}
\end{document}